\documentclass[sigconf,authorversion,nonacm, 10pt]{acmart}
\usepackage{ifthen}
\usepackage{xcolor}
\usepackage{blindtext}
\usepackage{algorithm}
\usepackage{graphicx}
\usepackage{amsmath}
\usepackage[noend]{algpseudocode}
\usepackage{subfig}
\usepackage{xspace}
\usepackage{amsthm}
\usepackage{balance}

\usepackage{graphicx}
\usepackage{footnote}
\usepackage{comment}
\usepackage{kantlipsum}
\makesavenoteenv{tabular}
\makesavenoteenv{table}
\usepackage{listings}
\usepackage{colortbl}
\usepackage[skins]{tcolorbox}
\usepackage[framemethod=tikz]{mdframed}

\makeatletter
\def\thanks#1{\protected@xdef\@thanks{\@thanks
        \protect\footnotetext{#1}}}
\makeatother

\newcommand{\name}{{DMart}\xspace}

\makeatletter
\let\OldStatex\Statex
\renewcommand{\Statex}[1][3]{%
  \setlength\@tempdima{\algorithmicindent}%
  \OldStatex\hskip\dimexpr#1\@tempdima\relax}
\makeatother
\algnewcommand{\LeftComment}[1]{\OldStatex \(\triangleright\) #1}
\algnewcommand{\LineComment}[1]{\OldStatex \(\triangleright\) #1}

\newcommand{\para}[1]{{\textbf{{#1}}}}

\algnewcommand{\IOComment}[1]{\OldStatex \(\triangleright\) #1}
\algnewcommand{\firstLeftComment}[1]{\OldStatex \(\indent\triangleright\) #1}
\algnewcommand{\secondLeftComment}[1]{\OldStatex \(\indent\indent\triangleright\) #1}
\algnewcommand{\thirdLeftComment}[1]{\OldStatex \(\indent\indent\indent\triangleright\) #1}
\algrenewcommand\algorithmicwhile{\textbf{With probability}}
\definecolor{LightCyan}{rgb}{0.88,1,1}
\definecolor{celadon}{rgb}{0.67, 0.88, 0.69}

\newcommand{\exclude}[1]{}
\newcommand{\showComments}{yes}
\newcommand{\note}[2]{
    \ifthenelse{\equal{\showComments}{yes}}{\textcolor{#1}{#2}}{}
}

\newcommand*\reqt[1]{%
\begin{tikzpicture}[baseline=(char.base)]
  \node[
    draw=black, 
    circle, 
    inner sep=1.2pt,   
    fill=white, 
    line width=1.1pt,  
    font=\sffamily\bfseries\small 
  ] (char) {#1};
\end{tikzpicture}}

\usepackage[english]{babel}
\usepackage{blindtext}
\usepackage{color}
\usepackage{graphicx} 
\usepackage{outlines}
\usepackage{xspace}
\usepackage[labelfont=bf, textfont=normalfont, font=small]{caption}

\usepackage[subtle]{savetrees}
\usepackage{enumitem}

\renewcommand\footnotetextcopyrightpermission[1]{} 
\setcopyright{none}

\acmDOI{}

\acmISBN{}

\acmPrice{}

\title{Bring Your Own Objective: Inter-operability of Network Objectives in Datacenters}
\author{Sanjoli Narang$^1$ \hspace{10pt} Anup Agarwal$^2$ \hspace{10pt} Venkat Arun$^3$ \hspace{10pt} Manya Ghobadi$^1$\\[10pt]}
\affiliation{%
  \institution{$^1$Massachusetts Institute of Technology \hspace{10pt} $^2$Google \hspace{10pt} $^3$UT Austin \\[25pt]}
  \country{}
}

\begin{document}

\setlength{\abovedisplayskip}{1pt}
\setlength{\belowdisplayskip}{2pt}

\begin{abstract}
Datacenter networks are currently locked in a ``tyranny of the single
objective''. While modern workloads demand diverse performance goals, ranging
from coflow completion times, per-flow fairness, short-flow latencies,
existing fabrics are typically hardcoded for a single metric. This rigid
coupling ensures peak performance when application and network objectives align,
but results in abysmal performance when they diverge. 

We propose \name, a decentralized scheduling framework that treats network
bandwidth as a competitive marketplace. In \name, applications independently
encode the urgency and importance of their network traffic into autonomous bids,
allowing diverse objectives to co-exist natively on the same fabric. To meet the
extreme scale and sub-microsecond requirements of modern datacenters, \name
implements distributed, per-link, per-RTT auctions, without relying on ILPs,
centralized schedulers, or complex priority queues.

We evaluate \name using packet-level simulations and compare it against network
schedulers designed for individual metrics, e.g., pFabric and Sincronia. \name
matches the performance of specialized schedulers on their own ``home turf''
while simultaneously optimizing secondary metrics. Compared to pFabric and Sincronia,  \name
reduces deadline misses by 2$\times$ and coflow completion times by 1.6$\times$ respectively, while matching pFabric short-flow completion times. \looseness=-1

\end{abstract}
\maketitle
\pagestyle{plain}
\vspace{-0.2cm}
\section{Introduction}
\label{sec:intro}

Over the years, networking research and practice have developed many notions of how bandwidth should be allocated among applications sharing a network. The most common default is to divide bandwidth equally among flows that share a bottleneck, with extensions such as proportional fairness or max-min fairness to handle multiple bottlenecks~\cite{vjacobson_cc, dctcp, max_min, num-kelly, fairness_analysis, fair_queueing}. 

It has also long been recognized, however, that such allocations—while seemingly fair—often fail to optimize application level objectives. As a result, a large body of work has explored alternative allocation strategies. Some systems explicitly prioritize or assign greater weight to certain flows~\cite{num-kelly}. Others bias allocations toward short flows, a strategy that reflects common application semantics and minimizes average flow completion time~\cite{SRPT_1966, SRPT_optimality_1968, pfabric, pias, homa, ndp}. Deadline-aware schedulers exploit knowledge of flow deadlines~\cite{d2tcp, d3, premtive_schdle, karuna} while coflow scheduling mechanisms consider groups of related flows together~\cite{varys, aalo, sincronia}. More recently, bandwidth allocation has been tailored to specific application domains, including machine learning training~\cite{cassini_hotnets, cassini_nsdi24, mltcp_arxiv, mltcp_hotnets} and video streaming~\cite{minerva}. Diversity exists not only in how fairness is defined, but also in how these policies are implemented. Some designs rely solely on end-host mechanisms, while others introduce in-network support~\cite{dctcp,pfabric,homa,ndp,xcp,rcp,dcqcn} or centralized schedulers~\cite{bwe,swan,fastpass,flexplane,sincronia}. Similarly, some systems assume fixed routing, whereas others dynamically pick routes in response to congestion~\cite{plb,hedera,presto,conga}. \looseness=-1

Despite this extensive design space, existing mechanisms typically target different—and often incompatible—points within it. Real networks host many applications with competing objectives, forcing operators to settle on a single common-denominator solution. Such solutions are rarely disastrous, but they are rarely optimal either. Most often, deployed networks continue to rely on flow-level fairness, occasionally augmented with coarse-grained priorities or weights~\cite{virtual_qos, qos_dngrade, hpc_qos, price_priority, rfc_qos}, and much of the richer research literature has seen limited deployment.

This paper proposes a new common denominator in which each application can independently choose (1) its objective and (2) degree of centralized control among its flows. This design is future proof towards diversity and innovation: applications may define objectives and mechanisms beyond those anticipated today, yet coexist as long as they satisfy a small set of minimally restrictive constraints. We focus on datacenter networks, which host diverse applications with heterogeneous performance requirements.

We take a novel approach to leverage markets to solve the problem of incomparability across objectives using money. Resources are assigned prices that reflect supply and demand, allowing participants with diverse and even arbitrarily complex objectives to interact through a simple interface: how much they are willing to pay. \looseness=-1

We import this principle into network resource allocation by designing \name. For example, consider a short flow that wants to quickly squeeze through a congested link versus a straggler flow in machine learning. Urgency is contextual: if the straggler is blocking hundreds of GPUs it is more urgent than the short flow. If not, then short flow can dominate and complete with minimal system impact. \name is a datacenter-wide link auction mechanism which resolves these tradeoffs through bidding at RTT timescales, rather than hard-coding a single global objective or implementation model. While details of the interface and currency model are discussed later, adopting a market-based abstraction raises four challenges. \looseness=-1

First, making scheduling decisions at RTT granularity would require datacenter-wide auction millions of times per second, which is infeasible at scale. \name uses switch-local auctions so switches make allocation decisions independently. When allocations require coordination across links, \name uses a decentralized mechanism that converges quickly without a global controller.

Second, some objectives may require the applications to adapt their bids to network conditions. For example, a deadline flow should bid less and yield to urgent traffic if congestion is transient. Otherwise, it should bid aggressively as slack shrinks.  
This involves accounting for how current decisions affect its future chances of being served. Such adaptations can lead to non-convergence and oscillations as participants react to each other. \name imposes minimal, well-defined restrictions on bidding behavior to ensure global convergence. \looseness=-1

Third, markets can incentivize participants to bid strategically to gain performance while hurting others, which adds unnecessary complexity. This concern is not merely about trust: even well-intentioned application developers may inadvertently induce strategic behaviour when their algorithms, built to bid adaptively, learn to bid strategically instead.  We therefore introduce mechanisms to keep bidding algorithms simple and promote truthful participation.

Finally, not all application developers want to participate directly in a market simply to send packets. \name supports third-party bidding libraries and simple service tiers (e.g., flat-price best-effort) to participate in auctions on an application’s behalf, hiding complexity and financial risk while preserving the common interface. This allows applications to opt into market-based allocation only when beneficial, without imposing an additional burden on others. \looseness=-1

Together, these ideas enable a network in which diverse objectives and mechanisms coexist through a shared market abstraction. Throughout this paper, we use \emph{money} as a convenient unit of account, but the system does not require real currency. Operators can implement bids using virtual credits/tokens. Our contributions are as follows : \vspace{0.1cm} \\ 
\reqt{1} We develop switch-local mechanisms to implement the auctions fabric efficiently and give a convergence argument under honest participation.\\
\reqt{2} We design bidding strategies for four objectives: (i) minimizing average flow completion time, (ii) maximizing deadline satisfaction, (iii) fair sharing, and (iv) minimizing average coflow completion time. The first three are decentralized, while the last uses application-local centralization.\\
\reqt{3} We demonstrate that all four objectives coexist within the same network and that \name dominates existing baselines by 2$\times$ in deadline miss rate, 1.6$\times$ in average coflow completion time while maintaining avergae FCT of short flows within 5\% of the most optimal datapath baseline across a wide range of mixed-workload mixed-objective simulations. 

This work does not raise any ethical issues.
\section{Motivation}
\label{sec:motivation}

In this section, we motivate the need for a scheduling primitive that can arbitrate among \emph{heterogeneous objectives} in modern datacenters, and explain why prior approaches fall short. \looseness=-1

\para{Heterogeneous Objectives.} 
Datacenters increasingly host workloads with heterogeneous and often conflicting performance objectives~\cite{fb_dc, aqua, gcloud}. User-facing services demand extremely low latency, while other workloads require sustained throughput or strict deadlines.
A large body of prior work has focused on optimizing individual objectives in isolation---for instance, Shortest Remaining Processing Time (SRPT)~\cite{pfabric, ndp, homa, pias} for minimizing average Flow Completion Time (FCT), Earliest Deadline First (EDF)~\cite{d3, d2tcp} for maximizing the number of deadlines met, and Smallest Effective Bottleneck First (SEBF)~\cite{varys, sincronia, aalo} for reducing average Coflow Completion Time (CCT), or resource interleaving~\cite{mltcp_hotnets, cassini_nsdi24, mltcp_arxiv, crux} for periodic ML traffic. However, these policies underperform when flows with different objectives contend for bandwidth, even on a single link. Figure~\ref{fig:mot_exp} illustrates how single-objective schedulers fail to satisfy heterogeneous SLOs simultaneously even in a simple case of 3 flows on a single link despite the existence of an optimal allocation that meets all SLOs (Figure~\ref{fig:mot_exp}(e)).

\para{Future-Proof Scheduling for Emerging Workloads.} Prior work on multi-objective scheduling is limited in scope, typically supporting only a narrow set of objectives: often no more than two~\cite{karuna, hedera, mocc_ieee, d2tcp}. 
For example, Karuna~\cite{karuna} combines priority queues with specialized congestion control to support mixtures of latency and deadline-sensitive traffic. However, such approaches do not generalize to broader or changing set of objectives often requiring redesign when new application classes introduce unfamiliar performance requirements. \looseness=-1 

\begin{figure}[t]
    \centering
    \includegraphics[width=0.48\textwidth]{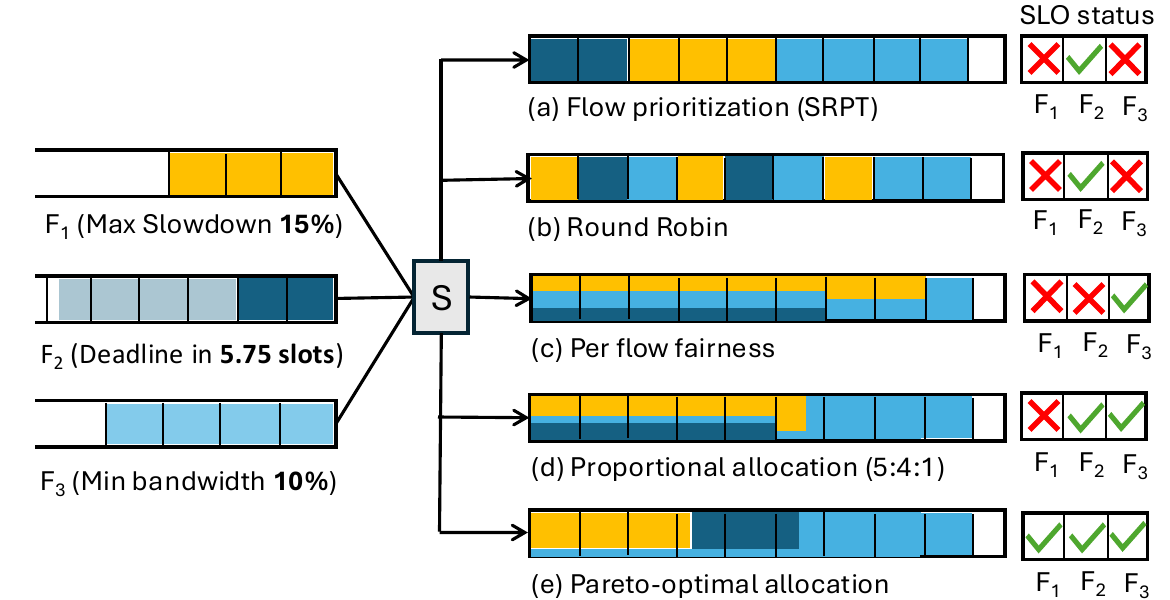}
    \caption{\textbf{Single-link example with heterogeneous SLOs} (F1 slowdown, F2 completion time, F3 minimum bandwidth). Outcomes: (a) SRPT (67\%, 2 units, 0\%); (b) Round Robin (233\%, 5 units, 0\%); (c) Fairness (267\%, 6 units, 33\%); (d) Proportional(5:4:1) (187\%, 5 units, 10\%); (e) SLO-optimal (11\%, 5.55 units, 10\%).}
    \label{fig:mot_exp}
\end{figure}

\para{Class-based Isolation.} Production datacenters mitigate conflicts by enforcing isolation through class-based QoS, assigning coarse-grained and slow-changing priorities to workload categories~\cite{virtual_qos, qos_dngrade, hpc_qos, price_priority, rfc_qos}. Within a class, the network is optimized for a single objective using packet scheduling or congestion-control~\cite{pias, homa, ndp, d3, d2tcp, dctcp, qclimb}. This approach is fundamentally limited because hardware restricts the number of classes and, more importantly, assigning classes itself is challenging because every team believes its traffic is critical. \looseness=-1

We argue that any scheduling decision depends on two distinct notions: \textit{entitlement} and \textit{urgency}. Entitlement captures the long-term privilege or importance of a flow, while urgency reflects its instantaneous need for service under its objective. For instance, a critical system task may have high entitlement, yet low urgency when its deadline is far away. Existing systems have been conflating these notions into a single priority value, obscuring their different roles. In practice, entitlement corresponds to policy or class, whereas urgency is objective-driven, dynamic, and increasingly diverse. \looseness=-1

\para{The Challenge: Expressiveness Gap.} When diverse workloads compete, allocation requires making trade-offs across qualitatively different performance objectives. Applications must negotiate with each other on how much they are willing to trade their own performance for others. In addition, applications must evaluate their urgency in the form of the opportunity cost of giving in : ``If I do not transmit now, can I recover later?''. We describe this challenge as the expressiveness gap: applications need a mutually intelligible way to express their priorities. 

For example, a deadline-driven flow with ample slack should yield, while a brief, latency-critical control message should be able to momentarily seize bandwidth.
Moreover, urgency is contextual: if many flows are stragglers, none can be treated urgent; if a single flow is delaying thousands of GPUs, it becomes extremely high priority. Static policies(e.g. SRPT, EDF etc.) only consider flow's local state and ignore the context, load and future opportunity cost (winning service is more expensive during high loads). In single-objective settings, every flow faces similar opportunity cost, hence urgency collapses to some static policy. In multi-objective settings, urgency is inherently dynamic and must account future opportunity costs. This is precisely what enables an Optimal solution in Figure~\ref{fig:mot_exp}(e) where flows cooperatively yield just enough service to satisfy all SLOs.

\para{Market Abstraction.}
Even on a single link, multi-objective optimization is NP-hard, making centralized solutions impractical at RTT timescales. 
We therefore reformulate network resource allocation as a market-clearing problem, rather than directly solving multi-objective optimization, which is both computationally intractable and ill-defined across heterogeneous objectives. Money (or tokens) provides a common scale for comparing urgency across workloads: flows express their urgency through a willingness to pay. We propose a deployable datacenter-wide \textbf{auction} mechanism repeated every RTT. Budgets encode \emph{entitlement}, while per-RTT bids encode dynamic \emph{urgency}. Contention is resolved by allocating capacity to the highest bidders and charging for service. Repeated auctions naturally solve the "Expressiveness Gap" by incorporating future opportunity cost in the bids. \looseness=-1

This abstraction is future-proof: new objectives can be expressed as new bidding policies (\S\ref{sec:rl}) without re-engineering the network. However, realizing such a market inside a packet network introduces engineering-economics trade-offs: operation at line rate, coordination across links, stability under flow churn, and robustness to strategic behavior. Section~\S\ref{sec:overview} provides an overview of the key mechanisms in \name that make this market solution successful in large networks while preserving simplicity and deployability.
\vspace{-0.2cm}
\section{Related Work} 
\para{Network Link Auctions.}
Prior work explored auction-based bandwidth allocation, notably Progressive Second Price (PSP) \cite{psp_for_network, psp2} and follow-ups~\cite{psp_1, psp_2, psp_3, psp_4}, as well as market-oriented programming~\cite{mop}. These approaches focus on static bandwidth division under fixed utility-bandwidth curves. In contrast, \name uses auctions as a \emph{scheduling primitive}, where bids encode \emph{application objectives} rather than static utility–bandwidth tradeoffs. \looseness=-1

\para{Centralized Bandwidth Allocation.} 
SWAN~\cite{swan} and BwE~\cite{bwe} represent a complementary line of centralized bandwidth allocation.
Both approaches rely on a centralized global optimizer, with control timescales of seconds or longer. In contrast, \name lets each application handle its own objective locally, flows adapt quickly (RTT scales) without relying on a global controller. BwE’s per-flow utility model cannot easily capture flow-coupled objectives such as coflows without changing the core design. \name supports centralized control at any desired scope: from individual flows, to application-level controllers, to datacenter-wide coordination. 

\para{Cloud Network Sharing.} FairCloud~\cite{faircloud} studies how to share network among tenants in a cloud environment by navigating trade-offs between required properties (min throughput guarantee, fairness etc.). FairCloud provides a few static allocation policies at different points on the tradeoff curve. Unlike static allocation, \name being an RTT-scale scheduler navigates a time varying trade-off between application objectives through auctions.

\section{Design Overview}
\label{sec:overview}
This section describes the \emph{mechanism} of \name: the switch-local per-link auctions, the participation interface, and the additional rules needed to compose per-link decisions into a work-conserving datacenter fabric. Section~\S\ref{sec:theory} addresses bidder's incentives and stability dynamics of \name. Section~\S\ref{sec:implementation} dives into implementation details. \looseness=-1
\vspace{-0.2cm}
\subsection{Auction-based Link Allocation}
\name is a datacenter-wide fabric that resolves contention by auctioning link service to competing flows. Switches run auctions at egress links and maintain a ranked list of top-k winners per link. Each link admits the number of winners enough to saturate the link capacity so $k$ varies with the demands of the competing flows. 
Because end-to-end progress requires all links on a flow’s path, a flow submits the same bid to every link it traverses, consistent with prior multi-link auction models~\cite{psp_for_network, psp2}. A flow’s bid represents its willingness to pay for service, providing a common unit of comparison across heterogeneous objectives. The auctions are \textit{second-price}; the winning flow is charged the highest losing bid for service. Second-price payments discourage applications from always claiming the highest priority and force winners to consider the externalities they impose on others. Switches only signal eligibility to send; rate control remains at end hosts. \looseness=-1

Figure~\ref{fig:topo_auction} illustrates the auction process on a leaf-spine topology (all links have same capacity, $k=1$). Flow-1 and Flow-2, both capable of saturating the link, compete at a downstream switch, where Flow-1 wins with a bid of 30\$ per link and pays a clearing price of 10\$ which is equal to the bid of loser Flow-2. The total amount Flow-1 pays is equal to the sum of clearing prices on all links (=10\$). 
\begin{figure}[t]
    \centering
    \includegraphics[width=0.4\textwidth]{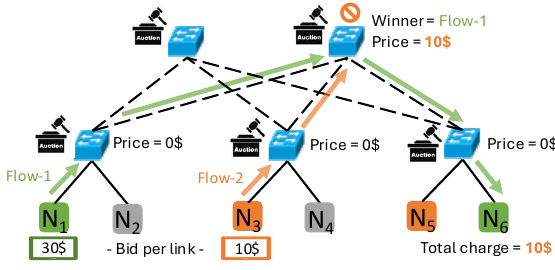}
    \caption{Illustration of end-to-end flow control using independent per-link second-price auctions on a 2-layer topology.}
    \label{fig:topo_auction}
\end{figure}

As discussed in Section~\S\ref{sec:motivation}, the auction bids encode urgency of flows. A single bid is akin to a static priority. However, encoding urgency inherently requires time-varying bids. To do this, auctions repeat every RTT at the switches entitling the winners to exclusive link service only for one RTT. Each RTT, the sender registers a fresh bid on all switches on its path using a lightweight probe. The sender transmits data at line rate for one RTT only when the flow wins \emph{all} auctions along its path; otherwise it pauses and continues probing. A flow $F_i$ with total size $D_i$ bytes traversing a path of fixed links, each with capacity $\mathcal{C}$ bytes per second must win approximately $N_i = D_i/(\mathcal{C} \times RTT)$ auctions over time. \looseness=-1

Flows are free to generate bids and update bidding policies over time, using fixed heuristics to adaptive or learning-based algorithms, implemented centrally or at endpoints. In practice, adaptive algorithms that respond to evolving market conditions (e.g., workload drift or new objectives) and incorporate precise future market conditions in bids best represent dynamic urgency. 
\name provides a library of adaptive bidding agents for a bunch of common objectives and an API to develop new ones (Section~\S\ref{sec:rl}).  
However, our analysis of the market dynamics and convergence does not mandate learning and do not restrict the type of update algorithms. It only requires a few regularity conditions on bidding policies (Section~\ref{sec:conv_theory}) that we trust all bidders to follow diligently. Innovative bidders may employ learning-based updates while uninterested wlos/application may stick to a constant best effort bidding.

\vspace{-0.2cm}
\subsection{Challenges in Market Design}

Independent per-link, per-RTT auctions are simple and react quickly to model time-varying urgency, but raise several challenges in multi-hop fabrics, especially when bidders adapt over time. \name addresses these challenges using switch-local mechanisms and a constrained bidder interface.

\para{Challenge-1 : Work Conserving Fabric.}  
As a flow traverses multiple links along its path, it must acquire service on all of them to make progress. Because link auctions are conducted independently, a flow may win an upstream auction yet lose at a downstream link, thereby reserving the first link without sending any data. Such uncoordinated decisions can lead to deadlock and wastage of link capacity despite outstanding demand. Figure~\ref{fig:topo_auction_deadlock} shows such a situation using (all links have same capacity). Flow-1 blocks Flow-2 at $S_1$ while itself being blocked downstream by Flow-3 at $S_5$, allowing only Flow-3 to make progress, even though Flow-2 could have transmitted as well.

\para{Solution : Overcommitment for Coordination. }
We introduce \textit{overcommitment}, where a switch that detects downstream blocking temporarily increases the winner quota $k$ to admit additional flows. Overcommitment does not increase link capacity; it only prevents upstream idling when a nominated winner cannot transmit. This is enabled by per-flow feedback to switches, where packets indicate whether the flow made progress in the previous RTT. Referring back to Figure~\ref{fig:topo_auction_deadlock}, after Flow-1 reports no progress in the previous RTT, switch $S_1$ temporarily increases the winner quota to admit Flow-2, breaking the deadlock. The winning quota restores back to 1 once Flow-1 is unblocked by Flow-3 at switch $S_5$. \looseness=-1

\begin{figure}[t]
    \centering
    \includegraphics[width=0.4\textwidth]{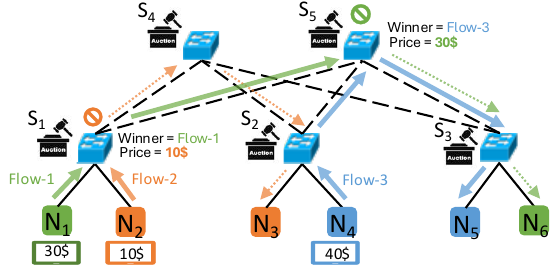}
    \caption{\textbf{Link idling and deadlock in link auctions} due to dependencies. Overcommitment increases winners quota at $S_1$ to unblock.}
    \label{fig:topo_auction_deadlock}
\end{figure}

\para{Challenge-2 : Translating Objectives to Bids. }  Flow objectives are defined end-to-end (e.g., minimizing completion time, meeting deadlines etc.), yet the switch auctions expect a simple per-RTT bid reflecting its urgency. Each flow must therefore translate its objective into bids under its budget.

\para{Solution: Trusted Bidding API.} \name provides objective-specific \textit{bidding agents} that translate end-to-end objectives into per-RTT bids. Flows delegate their budget and bidding decisions to the agents after specifying their objective function over measurable state (e.g., dollars as a function of completion time or throughput). Each RTT, the agent maps the flow state into its bid. We describe \name's bidding-agent algorithms in Section~\S\ref{sec:rl}. Applications are also free to stick to a best-effort bidding policy or innovate their own bidding policies following \name's regularity conditions.

\para{Challenge-3 : Manipulative or Strategic Bidding. }
In repeated auctions, bids can affect future congestion and hence future prices, creating incentives to manipulate the market rather than express urgency. We refer to this inter-temporal coupling as \emph{endogeneity}. For example, a low-urgency flow might overbid early to delay competitors with strict deadlines, hoping that future prices fall once urgent flows miss deadlines and drop out unfinished. 

\para{Solution : Restricted Feedback and Deferred Charge.}

\name limits bidders' ability to observe and exploit this "bid to future price" causality (endogeneity). Instead of exposing instantaneous link-local clearing prices to learning agents, bidders observe only a coarse, global price distribution $\mathcal{F}$ that summarizes prevailing market conditions. In addition, payments are deferred and charged in aggregate upon flow completion rather than per auction round. With these settings, attempts to game future prices are more likely to backfire than to help. Section~\S\ref{sec:trust_auctions} provides more details and theory.

\para{Challenge-4 : Stability of Learning-driven Market.}
With adaptive bidding agents, repeated auctions induce a coupled learning-and-bidding dynamic: bidders adapt their bidding policies based on the global price signature which itself evolves in response to their bids. It can lead to oscillations, price-chasing and unstable allocation, especially when many agents update their bidding policies simultaneously. Without a stationary regime, prices never settle, bids cease to reliably reflect urgency, and performance becomes unpredictable. \looseness=-1

\para{Solution : Lipschitz Policy and Timescale Separation.}
To ensure stability, \name adopts a \textit{two-timescale} approach: bidding policies update more slowly than per-RTT auctions but faster than the datacenter workload drift. Formally, if $\tau_B$ is the bidding policy update period and $\tau_D$ the drift timescale, \name enforces :
\begin{equation}
    RTT << \tau_B << \tau_D
\end{equation}
In addition, \name restricts bidding agents to adopt strategies that are \textit{Lipschitz} continuous in the global price distribution $\mathcal{F}$. This constraint limits the sensitivity of bids to fluctuations in market conditions, preventing abrupt policy changes that could destabilize prices. 
Section~\S\ref{sec:conv_theory} provides the formal convergence discussion.

\vspace{-0.1cm}
\section{Design Rationale and Theory}
\label{sec:theory}
This section explains why the \name design in \S\ref{sec:overview} is a reasonable foundation for multi-objective datacenter scheduling. We address two questions; (i)\emph{incentives}: if we let applications bid for service repeatedly, do they have an incentive to report meaningful urgency, or to manipulate prices to hurt others? and, (ii) \emph{stability}: as bids adapt to changing load and objectives, do repeated auctions converge, or do they oscillate?
Section~\S\ref{sec:trust_auctions} analyzes incentive properties of repeated second-price auctions and shows how \name’s restricted feedback and charging model discourages gaming. Section~\S\ref{sec:rl} describes \name's bidding API to compute objective-specific bids using the global price distribution $\mathcal{F}$. Section~\S\ref{sec:conv_theory} analyzes convergence under adaptive bidding. \looseness=-1

\vspace{-0.2cm}
\subsection{Incentives in Second-Price Auctions}
\label{sec:trust_auctions}

Under \name's abstraction, each flow acts as a bidder repeatedly competing for transmission opportunities until it finishes. In each auction round, a bidder must decide how much urgency to report through its bid. There are two concerns that need to be addressed : (i) Urgency computation and (ii) Incentives to bid only that correct urgency without manipulation. We refer to the second condition as "truthful" bidding.  \looseness=-1

\para{Single-Shot Second-Price Auctions.}
Consider a sealed-bid second-price auction for a single indivisible good. Bidder $i$ submits a bid $b_i$ and has private valuation of the good $V_i$. Let $P_{-i} = \max_{j\neq i} b_j$ denote the highest competing bid. Bidder $i$ wins if $b_i > P_{-i}$ and pays $P_{-i}$, yielding utility $U_i$
\begin{equation}
U_i = (V_i - P_{-i}) \cdot \mathbb{I}\{b_i > P_{-i}\}.
\end{equation}
Under standard independent private-value assumptions, truthful bidding of valuation ($b_i=V_i$) is a weakly dominant strategy i.e. no bid value other than $V_i$ can further improve bidder $i$'s utility. We build on this property to reason about incentives in repeated auctions.

\para{Repeated Auctions and Marginal Valuations.}
In repeated auctions, the notion of a bidder’s true “valuation” changes compared to single-shot auction.  Winning or losing the current round 
changes the flow’s future payoffs.  
We capture this forward-looking payoff as the flow’s \emph{expected continuation utility} $U_i(\cdot)$: the total performance benefit expected from now until completion \emph{minus} the expected future prices paid along the way. The correct per-round bid is the bidder’s \emph{marginal continuation utility} $\Delta U_i(\cdot)$—the improvement in expected net utility from winning the current round rather than losing, plus any immediate per-round gain or loss. \looseness=-1

Let $S_t$ denote bidder $i$'s internal state at time $t$ (e.g., remaining size, slack, coflow context). Let $r_l(S_t)$ and $r_w(S_t)$ denote the immediate gain in the current auction round $t$ if the bidder wins and loses respectively. Then the bidder’s truthful urgency is given by its \emph{marginal continuation utility}:
 \begin{equation}
\label{eq:marg_util}
\begin{split}
& \Delta U_i(S_t) = U_i(S_{t+1} \mid S_t, \text{win}) - U_i(S_{t+1} \mid S_t, \text{lose}) \\
& \quad + r_w(S_t) - r_l(S_t)
\end{split}
\end{equation}
Thus, truthful bidding in repeated auctions corresponds to bidding $b_i(S_t)=\Delta U_i(S_t)$ every round.  

Winning or losing the current round affects (i) flow's future states and (ii) other bidders' future states and hence future prices. The second condition creates incentives for bidders to game the system. To eliminate this incentive, 

a \emph{price-taking} abstraction is required where an individual flow’s current bid $b_i(S_t)$ does not affect the distribution of future clearing prices $P_{t'}$ in any future round $t'>t$. Formally,
\begin{equation}
\label{eq:exo}
P_{t'} \;\perp\; b_i(S_t), \qquad \forall\, t' > t.
\end{equation}
The following theorem formalizes this insight : 
\begin{theorem}[Truthful Bidding under Price-Taking Environment]
If the market is price-taking, as in Equation~\ref{eq:exo}, then bidding truthful marginal utility $b_i(S_t)=\Delta U_i(S_t)$ in each round is a weakly dominant strategy.
\end{theorem}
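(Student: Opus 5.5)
The plan is to reduce the repeated auction, round by round, to the single-shot second-price auction analyzed above. The reduction rests on the price-taking condition (Equation~\ref{eq:exo}).

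Fix bidder $i$, a round $t$ and a state $S_t$. Write the total expected payoff from round $t$ onward as a one-step Bellman decomposition. If $i$ wins, it collects $r_w(S_t)$ plus the continuation $U_i(S_{t+1}\mid S_t,\text{win})$ and pays the clearing price $P_t$. If it loses, it collects $r_l(S_t)$ plus $U_i(S_{t+1}\mid S_t,\text{lose})$ and pays nothing. Here $P_t$ is the highest competing (losing) bid, i.e.\ the $P_{-i}$ of the single-shot analysis. Subtracting the losing branch from both, which does not depend on $b_i(S_t)$, the part of the payoff that depends on the current bid is
\begin{equation*}
\big(\Delta U_i(S_t) - P_t\big)\cdot \mathbb{I}\{b_i(S_t) > P_t\},
\end{equation*}
with $\Delta U_i(S_t)$ as in Equation~\ref{eq:marg_util}.

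The key step is to justify that $\Delta U_i(S_t)$ is a fixed quantity, independent of $b_i(S_t)$. This is exactly where Equation~\ref{eq:exo} is used. Because future clearing prices $P_{t'}$, $t'>t$, are independent of $b_i(S_t)$, the continuation values $U_i(S_{t+1}\mid S_t,\cdot)$ depend on the current bid only through the outcome (win or lose), which fixes $S_{t+1}$. They do not depend through the price environment the bidder faces afterward. Formally, the expectation defining $U_i$ is taken over a price process whose law does not move with $b_i(S_t)$. I would also note that deferred charging changes only when payments are made, not their total, so it does not affect this accounting.

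With that in hand, the round-$t$ problem is a single-shot second-price auction with valuation $V_i=\Delta U_i(S_t)$ and competing bid $P_t$. The standard case analysis applies. If $P_t<\Delta U_i$, bidding truthfully wins with nonnegative surplus, and any other bid either also wins at the same price or loses and forfeits that surplus. If $P_t>\Delta U_i$, truthful bidding loses and gets zero, while any winning deviation pays more than it gains. Ties contribute zero either way. So $b_i=\Delta U_i(S_t)$ weakly dominates pointwise in $P_t$, hence also in expectation.

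Finally, I would lift the per-round statement to the whole repeated game by backward induction, or a one-shot-deviation argument. Suppose $i$ bids truthfully in all future rounds, which is what the continuation values assume. Then truthful bidding at $t$ is weakly optimal, and any deviating policy can be improved round by round without loss. The main obstacle is making the independence step rigorous: $U_i$ must be well defined as a function of $S_{t+1}$ alone given the exogenous price law, and the one-shot-deviation principle needs finite horizon or discounting. Flows finish after roughly $N_i$ wins, so I would argue finite expected horizon there. Multi-link paths, where the price is the sum of clearing prices along the path, can be handled by letting $P_t$ denote that sum, since the bidder wins iff its bid beats every link's price.
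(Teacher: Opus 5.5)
Your core argument is sound and is essentially the paper's. You split the round payoff into $(\Delta U_i(S_t)-P_t)\,\mathbb{I}\{b_i>P_t\}$ plus a bid-independent remainder (the paper's $C(h_k)=L^{\pi}_{k+1}+r_l$). You then use exogeneity to freeze the continuation values, apply single-shot second-price dominance, and extend across rounds by one-round deviations.

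The one difference is direction. The paper sweeps forward from an arbitrary $\pi$ ($G^0_\pi$, then $G^1_{G^0_\pi}$, \dots), so the bid installed at round $k$ is the marginal utility against the not-yet-modified continuation. Its claim that the limiting policy bids its own marginal utility therefore needs an extra fixed-point (policy-iteration) step. Your backward induction installs self-consistent truthful bids directly. The cost is needing a finite or properly terminating horizon, which you flag and the paper leaves implicit.

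On the independence step you call the main obstacle: Equation~\ref{eq:exo} alone (your bid does not move later prices) does not make $U_i$ a function of $S_{t+1}$ alone. Winning or losing selects on $P_t$. If later prices are correlated with $P_t$ (say $P_{t+1}=P_t$), your ``fixed valuation'' becomes a function of $P_t$ and the pointwise argument fails. You need later prices to be independent of the current one given the state. The paper gets this from its i.i.d.-from-$\mathcal{F}$ model and from conditioning on the observed history in its appendix assumption.

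The step that actually fails is your multi-link remark. On a path the flow wins iff $b>\max_\ell P_\ell$, but under your premise it is charged $\sum_\ell P_\ell$. Its round payoff is then $(\Delta U_i-\sum_\ell P_\ell)\,\mathbb{I}\{b>\max_\ell P_\ell\}$. That is not a single-price second-price auction, and setting $P_t=\sum_\ell P_\ell$ misstates the winning event. Truthfulness genuinely breaks here. Take two links whose highest competing bids are both $6$, with $\Delta U_i=10$: bidding $10$ wins both and nets $10-12=-2$, whereas any bid below $6$ nets $0$. The theorem, and the paper's proof with its single $M_k=\max_{j\neq i}b_j^k$, cover one clearing price per round. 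Paths fit only in the paper's reduced model of \S\ref{sec:rl}, where the path maximum serves as both the winning threshold and the price. Drop the remark or restrict it to that model.
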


\begin{proof}
Appendix~\S\ref{sec:app_ic} provides the full proof.
\end{proof}
\vspace{-0.1cm}

\para{Price-Taking in \name.} Approximate price-taking behaviour emerges when flows are short-lived, arrivals are highly dynamic, and any single bidder has negligible influence on aggregate prices. However, real datacenters do not always follow the price-taking abstraction. Therefore, \name \textit{induces} price-taking by limiting a bidder's ability to infer and exploit the causal impact of its own bids on future prices. For this, bidding agents may learn only from an aggregated \emph{global price distribution} $\mathcal{F}$, rather than link-local per-RTT clearing prices. Payments are deferred and charged in aggregate upon flow completion, reducing per-round strategic feedback. \looseness=-1

Truthful marginal bidding simplifies agent behavior by eliminating incentives to search for complex strategic deviations which in turn mitigates price-chasing and helps stabilize market dynamics. While \name considers future costs, it does not claim global optimality. Its main goal is a stable, robust decentralized mechanism that performs well across heterogeneous objectives.

\vspace{-0.3cm}
\subsection{Objectives and Bidding Agents}
\label{sec:rl} 
A best-effort bidder with no objective simply bids a fixed per-round valuation. However, most datacenter flows have end-to-end objectives (e.g., minimizing completion time, meeting deadlines, or achieving throughput) and their urgency depends on evolving state. Under \name's restrictions to enforce price-taking, we describe one algorithmic approach for translating Markovian objectives (objectives whose value depends only on current state) into per-round bids using global price distribution $\mathcal{F}$. We model flow's state transitions over repeated auctions assuming price per auction is sampled from $\mathcal{F}$ and estimate marginal continuation utility.

\para{From objective to per-round rewards}
An end-to-end objective can be expressed as an objective-level dollar utility function (say $u(o)$) as a function of the achieved objective value $o(.)$ (e.g. 50\$ for 15\% slowdown, 40\$ for 20\% slowdown). As service progresses, the flow’s state $S_t$ (remaining size, slack, etc.) determines both $o(S_t)$ and the sensitivity of $u(o)$ given by ($\frac{du(o)}{do}$) to immediate service (e.g. slowdown already 15\%, so $u(o) \leq$ 50\$). The sensitivity defines the per-round rewards $r_w(S_t)$ and $r_l(S_t)$, capturing the immediate gain or loss from service. We now need to compute \emph{marginal continuation utility} using this model setting (Equation~\ref{eq:marg_util}). 

\para{Computing bidding policies.}
We model per-flow bidding as a sequential decision problem over repeated auctions. Consider a flow $i$ in state $S_t$ at round $t$ submitting a bid $b\in\mathcal{B}$, and let $P_t \sim \mathcal{F}$ denote the second price in that round (the highest competing bid). Let $U(S_t)$ be the bidder’s expected future \emph{net utility} from $S_t$ onwards. Then $U(\cdot)$ satisfies the following state-transition equation: \looseness=-1
\begin{equation}
\label{eq:bell}
\begin{split}
U(S_t) = \mathbb{E}_{P_t\sim\mathcal{F}}
\Big[ &\mathbb{I}\{P_t<b\}\big(r_w(S_t)-P_t+U(S_{t+1}\mid S_t, win)\big) \\
& + \mathbb{I}\{P_t\ge b\}\big(r_l(S_t)+U(S_{t+1}\mid S_t, lose)\big) \Big]
\end{split}
\end{equation}
This formulation of state transition applies at any centralized scope— individual flows or groups of flows with a joint state. As per \S\ref{sec:trust_auctions}, bidding $b=\Delta U(S_t)$ constitutes the truthful urgency to bid mapping.
\begin{equation}
\label{eq:marg_util_policy}
\begin{split}
& b = \Delta U(S_t) =
\big(r_w(S_t)+\mathbb{E}[U(S_{t+1}\mid S_t, win)]\big) \\
& \quad - \big(r_l(S_t)+\mathbb{E}[U(S_{t+1}\mid S_t, lose)]\big).   
\end{split}
\end{equation}
 
Equation~\ref{eq:marg_util_policy} and Equation~\ref{eq:bell} are solvable by substitution and iterative \textit{dynamic programming}. Although this appears complex, the state space is small for common objectives (e.g., remaining RTTs or deadline slack), making dynamic programming tractable and often admitting closed-form solutions.
An application’s \emph{budget} enters through the scale of the objective-level utility function $u(o)$ and naturally gets absorbed in bid computation. \looseness=-1

\para{Multi-link Modelling.} $\mathcal{F}$ denotes the global end-to-end price distribution (the maximum bid along a path). A flow advances when its bid exceeds the maximum link price along its path. With overcommitment, this reduces path scheduling to a single-link auction at the most congested point making the modelling similar to a repeated single-link auction. \looseness=-1

\para{Role of Learning Agents.} Dynamic programming suffices to compute marginal continuation utiliy in our model. Due to induced price-taking behaviour, learning algorithms like RL does not threaten truthfulness and may even help in expanding model, such as forecasting non-stationary prices or handling objectives that are difficult to specify analytically. We nevertheless assume cooperative bidders and do not defend fully defend against large-scale adversarial behaviors such as coalitions.
\vspace{-0.2cm}
\subsection{Common Datacenter Objectives}
This section instantiates \name’s bidding framework for common datacenter objectives.
For each objective, we specify (i) the flow state $S_t$, (ii) a representative objective-level
utility function $u(o)$ chosen by the application, and (iii) the induced per-round win and loss rewards $(r_w, r_l)$. Substituting these into the state transition equation
(Equation~\ref{eq:bell}) yields objective-appropriate bidding policies. The utility shape is application-defined and reflects how users trade money against performance; \name does not fix or assume a canonical form. Our goal is not to derive new schedulers, but to show that widely-used objectives can be expressed through a
unified bidding interface. Table~\ref{tab:objectives} summarizes objective utilities. \looseness=-1

\begin{table*}[t]
\centering
\scriptsize
\setlength{\tabcolsep}{5pt}
\renewcommand{\arraystretch}{1.1}
\begin{tabular}{|
p{1.5cm} |
p{3.5cm} |
p{2.7cm} |
p{3cm} |
p{2.2cm} |
p{2.2cm} |
}
\hline
\textbf{Objective} 
& \textbf{State $S_t$} 
& \textbf{Objective value $o(S_t)$} 
& \textbf{Utility $u(o)$} 
& \textbf{Win reward $r_w(S_t)$} 
& \textbf{Loss reward $r_l(S_t)$} \\
\hline

\textbf{Average FCT} 
&  $(F)$ - Remaining time  
& $o(S_t) = F$
& $C - w\!\left(o - \frac{o^2}{2T}\right)$ 
& $0$ 
& $-w\!\left(1 - \frac{F}{T}\right)$ \\

\textbf{Deadline} 
& $(F,D)$ - (Remaining time, Slack)
& $o(S_t) = \mathbb{I}\{ F=0, D>0\}$
& $C \times \mathbb{I}\{o \}$ 
& $0$ or $C$
& $0$ \\

\textbf{Fairness} 
& (X) - EWMA Average throughput
& $o(S_t) = X$
& $w\log(o)$ 
& $\frac{w}{S_t}\,\alpha(1-X)$ 
& $-w\alpha$ \\

\textbf{Average CCT} 
& $(B_F)$ - Bottleneck Remaining time 
& $o(S_t) = B_F$
& $C - w\!\left(o - \frac{o^2}{2T}\right)$  
& 0 
& $-w\!\left(1 - \frac{F}{T}\right)$ \\

\textbf{Best Effort} 
& None 
& None
& Constant 
& 0 
& 0 \\

\hline
\end{tabular}
\caption{\textbf{Common datacenter objectives and their sample utility functions and per-round rewards for DP formulation}}
\vspace{-0.4cm}
\label{tab:objectives}
\end{table*}

\begin{figure}
    \centering
    \includegraphics[width=0.45\textwidth]{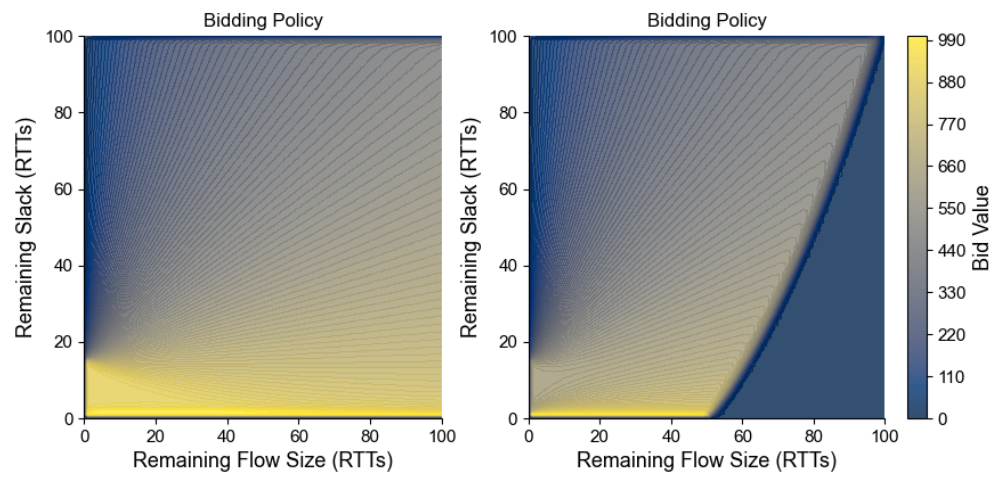}
    \caption{\textbf{Deadline-objective bidding as a function of state.} With a higher budget ($C=4000$), bidders remain active over a larger region of the state space than in the lower-budget case ($C=2000$).}
    \label{fig:dl_htmap}
\end{figure}

\noindent\textbf{Flow Completion Time (SRPT-like).}
The state $S_t$ is the remaining service time $F$ in RTTs. We use a concave, decreasing
quadratic utility in $F$, so delaying short flows incurs higher penalty than delaying long
flows. This causes short flows to bid more aggressively and recovers SRPT-style
prioritization when substituted into Equation~\ref{eq:bell}. The specific utility shape (e.g., quadratic vs.\ inverse) does not affect inter-flow ordering, only the relative scaling across objectives.

\noindent\textbf{Deadline met.}
The state $S_t$ is $(F,D)$, representing remaining service and deadline slack in RTTs. Utility is a step function: a fixed reward $C$ is obtained only if the flow completes before its deadline. 
Figure~\ref{fig:dl_htmap} depicts bidding policies obtained from solving state Equation~\ref{eq:bell} for two different values of budgets $C$. As slack decreases, bids grow high. Bidders on low budgets have a smaller space of active bidding. \looseness=-1

\noindent\textbf{Fairness (Proportional-Fairness Proxy).}
The state $S_t$ is the EWMA (factor-$\alpha$) average throughput over RTTs. Using logarithmic utility ensures flows with low throughput experience higher urgency for additional service. Bids increase sharply as a flow approaches starvation, while occasionally bidding high (to maintain throughput) otherwise.

\noindent\textbf{Coflow Completion Time (CCT).}
Bidding agents are coflow-centralized with visibility on all flows of the coflow. The state $S_t$ is the remaining service time on the bottleneck port. SEBF-like policy is implemented by reusing the SRPT-like utility on the bottleneck, while Varys is emulated by assigning non-bottleneck flows a deadline equal to the bottleneck completion time, reproducing MADD-style allocation~\cite{varys}. 

\noindent\textbf{Best Effort (Fixed Service Tier).}
Flows without explicit objectives submit a constant per-RTT bid without adapting to market or congestion.

Detailed derivations of corresponding bidding policies is discussed in Appendix~\S\ref{sec:app_api}.

\subsection{Convergence under Learning Agents}
\label{sec:conv_theory}

\name admits bidding agents that adapt their bidding policies over time based on observed market conditions. Adaptation operates on a slower timescale than per-RTT auctions but faster than datacenter drift. 

\para{Stationary Point. } Let $\mathcal{F}$ denote the global price distribution collected by \name over an update interval. 
Given $\mathcal{F}$, each agent $i$ computes a (possibly stochastic) bidding policy $b_i(\cdot)$ that maps its local state $S_t$ (e.g., remaining size or slack) to a bid.
Running these policies through per-link second-price auctions, together with flow arrivals and departures, induces a new realized price distribution $\mathcal{F}'$ in the next interval. Agents then update their policies based on $\mathcal{F}'$.  We represent this iterative market-level evolution through an operator $Mar(\cdot)$, defined as the composition of bidders’ responses and the resulting network dynamics:
\begin{equation}
\label{eq:fixed_point}
    \mathcal{F}' = \mathcal{H}(\pi(\mathcal{F})) \;=\; Mar(\mathcal{F}),
\end{equation}
where $\pi(\mathcal{F})=\{b_i(\mathcal{F})\}_i$ denotes the collection of bidding policies and $\mathcal{H}(\cdot)$ captures the induced auction outcomes, congestion, and flow churn. A steady-state market corresponds to a fixed point of this operator:
\begin{equation}
\label{eq:fp}
    \mathcal{F}^* = Mar(\mathcal{F}^*),
\end{equation}
at which the global price signature stabilizes under learning.

\para{Convergence Argument.}
Because bidding agents adapt in response to $\mathcal{F}$ while $\mathcal{F}$ itself is generated by their bids, naive learning can in principle lead to oscillations or price chasing if agents update too aggressively or synchronously. In repeated auctions, flow state trajectories under different bidding policies may diverge locally, since win/lose outcomes affect future states and bids. Since the datacenter market remains endogenous, we do not claim convergence for arbitrary learning dynamics. Instead, our argument relies on two assumptions of datacenter workloads:\\
(i) \emph{Bounded Divergence} of flow trajectories due to short flows. \\
(ii) \emph{Strong Mixing} induced by continuous flow arrivals.\\
These prevent perturbations due to a single flow from persisting indefinitely at the market level.

A single auction-clearing step in \name is inherently \emph{non-expansive}: clearing prices in a second-price auction are order statistics (e.g., the maximum losing bid), so small perturbations in bids are not amplified by the auction rule.
Under the above mixing assumptions, the price distribution generated over a flow’s lifetime exhibits bounded sensitivity to bidding policies.
Formally, we model this as a Lipschitz property:
\begin{equation}
    \mathcal{W}\!\left(\mathcal{H}(\pi), \mathcal{H}(\pi')\right)
    \;\le\;
    L_H \cdot \|\pi - \pi'\|,
\end{equation}
$\mathcal{W}(\cdot,\cdot)$ is Wasserstein distance between price distributions.

In addition, \name constrains bidding policies to vary Lipschitz smoothly with the observed global price distribution:
\begin{equation}
    \|\pi(\mathcal{F}) - \pi(\mathcal{F}')\|
    \;\le\;
    L_F \cdot \mathcal{W}(\mathcal{F},\mathcal{F}').
\end{equation}

Combining the two inequalities yields an effective bounded-sensitivity property of the induced market iteration:
\begin{equation}
    \mathcal{W}(\mathcal{H}(\mathcal{F}), \mathcal{H}(\mathcal{F}'))
    \;\le\;
    (L_H \cdot L_F)\, \mathcal{W}(\mathcal{F}, \mathcal{F}'),
\end{equation}

suggesting that the market operator behaves as a contraction in practice when $L_H \cdot L_F < 1$. While we do not claim a formal contraction guarantee for all workloads, our experiments consistently exhibit non-expansive behavior and rapid convergence to a stable price regime (Appendix~\S\ref{sec:app_conv}).

\para{Additional Measures.}
\name further enforces stability through explicit timescale separation by rate-limiting policy updates ($\tau_B \gg RTT$) and smoothing the price distribution (e.g., EWMA or KDE) before it is fed to learning agents.

\section{Implementation Deep Dive}
\label{sec:implementation}
We implement \name on top of the TCP stack, primarily at the socket interface. The design applies to other network stacks (e.g., RDMA) as long as the prototype has the following four components. 
(i) a compact market header carried in packets to convey flow identity, bids, and auction state;  
(ii) switch pipeline logic that parses this header and maintains per-link auction state;  
(iii) sender-side rate control that gates transmission based on auction outcomes; and  
(iv) a bidding-agent shim that maps flow state to per-RTT bids.

\subsection{Packet Metadata for Bidding}

Since auctions execute at switches, packets must convey enough information to (i) identify the flow ID, (ii) register its current bid, at the switches and (iii) return the end-to-end auction outcome to the sender. \name uses lightweight \textit{probe} packets for bid registration. Before sending data, a sender transmits one probe per RTT containing only a TCP header and a small market header (no payload). Probes are sparse and small, so their overhead is negligible and they do not materially contribute to congestion. Restricting bid registration to probes avoids distortions from data burstiness, loss recovery, or sender pacing. However, to avoid an RTT delay at headstart, we also allow SYN connection packets to work like probes too and help in bid registration.

\para{Market Header.} Each probe packet carries a 12-byte \textit{market header} alongside the TCP header (Figure~\ref{fig:pkt_hdr}). The header encodes a flow ID, application ID, bid value, price telemetry, and a compact flags field, in addition to a reserved byte for length. Switches parse the header, update auction flags, and forward the packet; the receiver echoes the header in ACKs, allowing the sender to learn the end-to-end outcome through ACKs. The flags field uses four bits:\\
(i) \textit{Auction status}: set by the sender; cleared by switch where the flow loses; once cleared, remains cleared downstream. \looseness=-1  \\
(ii) \textit{Packet type}: distinguishes probes from data packets.  \\
(iii) \textit{Previous end-to-end status}: indicates whether the flow made progress in the last RTT, inferred from the last ACK. \\ (iv) \textit{Bypass}: skips auction logic for ultra-short flows that complete within a single RTT.

\para{Price Telemetry.} 
To construct the global price distribution $\mathcal{F}$ used by bidding agents, each probe packet carries an additional field that tracks the \emph{maximum bottleneck price} along its path. If the flow is a winner, it records the highest bid from losers set, and if it is a loser, it records the lowest bid from the winners set. This field is updated at each switch as the maximum observed so far. The observed telemetry is logged across many RTTs on the senders and later globally aggregated offline (or by a control-plane service) to generate $\mathcal{F}$. Crucially, it is \textit{not} exposed to bidding agents to maintain Price-Taking abstraction.

\begin{figure}
    \centering
    \includegraphics[width=0.43\textwidth]{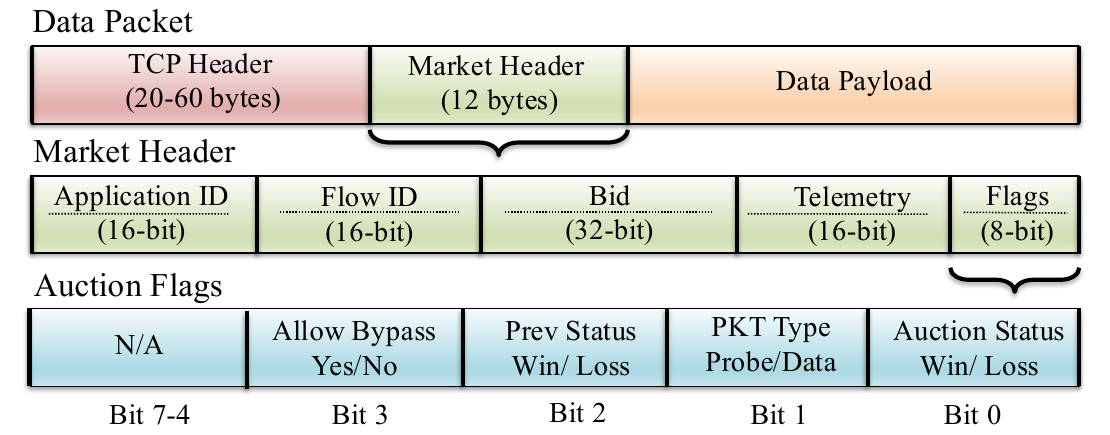}
    \caption{\textbf{TCP Packet with the new Market Header}}
    \vspace{-0.1cm}
    \label{fig:pkt_hdr}
\end{figure}

\subsection{Rate Control and Queuing}
\vspace{-0.2cm}
This subsection discusses the underlying transport and datapth of \name.

\para{Winners' Quota.} Each egress link maintains an overcommitment factor $k$--the number of simultaneous ``winners'' admitted for that link in the next RTT. Intuitively, $k{=}1$ corresponds to a strict single-winner auction; $k{>}1$ admits multiple winners to avoid upstream idling when the nominated winner cannot make end-to-end progress or when a single winner is unable to saturate link.
In our prototype, the default $k$ for a switch is set to be ratio between its egress-link capacity and the host access-link capacity, $C_{egress}/C_{host}$. If the host link is 50Gbps, while spine links are 200 Gbps as in Figure~\ref{fig:ls_topology}, then the spine switches have default overcommitment of 4.

\para{Per-link auction state (current vs. next RTT).} To implement per-RTT auctions, each switch maintains two auction bid records of size $k$ and a timer per egress link:
\\
(i) \textbf{Current winners.} The top-$k$ winning bids and flow IDs that are entitled to send during the current RTT. This record remains fixed during the current RTT.
\\
(ii) \textbf{Next winners.} The top-$k$ bids and corresponding flow IDs observed so far in the current RTT : potential winners in the next RTT. This record updates upon every probe packet until an RTT timer expires when the switch atomically promotes ``next winners'' to ``current winners'' and begins collecting bids for the following RTT.

\para{Adaptive Overcommitment.}
Although link is an indivisible good in \name, overcommitment temporarily admits multiple winners for work conservation. Switches use the \textit{Previous end-to-end status} flag to detect downstream blocking. If a flow is a current winner on this link but the previous end-to-end status indicates it did not make progress in the last RTT, the switch temporarily increases $k$ (e.g., by 1) to admit additional winners. Once the flow resumes end-to-end progress, the switch reduces $k$ back toward its default. 

\para{Startup and Pipeline Delays.}
As discussed before, to avoid an extra RTT of delay on start, we allow TCP SYN packets to participate in bid registration like probes. After startup, the one-RTT delay in pipeline due to probing does not create idle periods: a previously-winning sender takes roughly one RTT to observe it has lost (via ACK feedback) and stop sending, while a newly-winning sender similarly takes one RTT to observe it has won and resume sending. \looseness=-1

\para{Sender rate control.}
Switches do not drop packets or explicitly throttle losing flows. They only update auction flags in packet headers. Rate control is enforced at the sender. A sender transmits data at line rate only when the \textit{Auction status} observed in the most recent probe ACK is set : indicating that the flow won all link auctions along its path.
Regardless of win/loss, probes are sent continuously (ACK-clocked) once per RTT to register latest bids and obtain fresh outcome status. \looseness=-1

\para{Queuing.}
Switches enqueue data packets in a FIFO queue. Since senders gate transmission on end-to-end wins, steady-state queue occupancy is bounded by the link BDP. Short bursts may occur during temporary overcommitment changes, but queues drain once blocked flows are excluded or contention clears. If the queue explodes, switches clear the winners' list forbiding everyone to send for an RTT to clear queues. Probes and ACKs are placed in a separate FIFO to avoid being delayed behind data packets.

\para{Switch Complexity.}
Although \name requires extensions to the switch pipeline, the added state and logic are lightweight compared to datapaths like pFabric. Each egress port maintains top-$k$ winner records for the current and next RTT epochs, along with simple comparators to update bids and auction-status flags. Let $E$ denote the number of egress ports and $k$ the winner quota per RTT. The required switch state is $O(E\cdot k)$ words for per-port winner lists, plus $O(E)$ registers for configuration and timers. In practice, $k$ is small especially in oversubscribed topologies. Figure~\ref{fig:switch_auction} illustrates the resulting switch block diagram.

\begin{figure}
    \centering
    \includegraphics[width=0.42\textwidth]{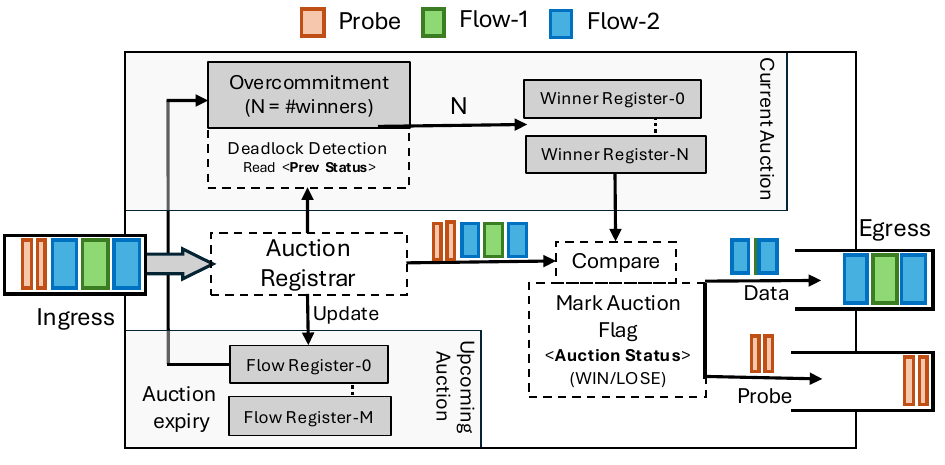}
    \caption{Block diagram of switch for holding auctions. }
    \label{fig:switch_auction}
\end{figure}
\begin{figure}
    \centering
    \includegraphics[width=0.35\textwidth]{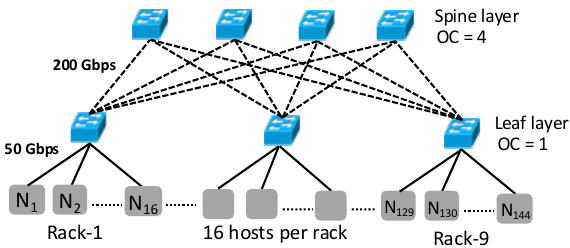}
    \caption{Experiment Topology(OC:Overcommitment)}
    \label{fig:ls_topology}
\end{figure}
\begin{table}
\scriptsize
\centering
\linespread{1.05}\selectfont\centering
\begin{tabular}{|p{1.5cm}|p{2.8cm}|p{2cm}|}
\hline
\textbf{Distribution}       & \textbf{Apps/Workloads}    & \textbf{Objective}      \\
\hline
Websearch       & Web, RPC    & Latency/FCT      \\ \hline
Datamining      & Analytics, Control           & Latency/FCT      \\ \hline
Uniform            & Map-Reduce, Background  & Deadline, Fairness         \\ \hline
Alibaba         & Storage    & Best Effort       \\ \hline
\end{tabular}
\caption{Evaluation Workloads and Flow Size Distributions.}
\vspace{-0.1cm}
\label{tab:workloads}
\end{table}
\para{Routing.}
The current \name design is incompatible with per-packet load balancing (e.g., packet spraying), but works naturally with path-stable schemes such as flow-based ECMP. While \name exposes link prices that could serve as congestion signals for path selection, we do not explore congestion-aware routing in this paper. Investigating how probes can be used to explore alternative paths while sending data on currently chosen paths is scoped for future work.

\section{Evaluation}

\label{sec:eval}
In this section, we evaluate \name through packet-level simulations in ns-3. While \name's market abstraction applies broadly to any network stack, our prototype is implemented on top of the TCP stack, enabling direct comparison against widely deployed datacenter transport and scheduling baselines.
Our evaluation answers three key questions: \vspace{-0.1cm}
\begin{enumerate}
    \item Does \name reproduce state-of-the-art \emph{single objective} schedulers when all flows share the same goal?
    \item Does \name enable \emph{efficient coexistence} of heterogeneous objectives under realistic load?
    \item Does the market remain stable and adaptive under workload drift, bursts, and varying degrees of contention?
\end{enumerate}

\para{Topology.}
We simulate a 144-host leaf-spine topology widely used in prior scheduling work~\cite{pfabric, pias, karuna}. The network consists of 9 racks with 16 hosts per rack. Host links operate at 50\,Gbps, and spine links at 200\,Gbps, yielding full bisection bandwidth (1:1 oversubscription). Traffic is balanced using flow-level hash ECMP. Cross-rack RTT is approximately 10\,$\mu$s.
Figure~\ref{fig:ls_topology} illustrates the topology.

\para{Workloads.}
We model representative datacenter traffic using established workload families used in prior works~\cite{dctcp, pfabric, homa, pias, d2tcp, hpcc} summarized in Table~\ref{tab:workloads}. The uniform workload draws flow sizes uniformly at random from the range $[50\,\mathrm{kB},\,10\,\mathrm{MB}]$. These workloads span latency-sensitive RPCs, web, deadline-driven transfers, coflow-style shuffle/map-reduce traffic, and best-effort storage flows. Unless stated otherwise, a flow's source and destination hosts are sampled uniformly at random except for explicit burst/hotspot/coflow scenarios. Flow arrivals are Poisson except for flows belonging to a coflow where they are concurrent. Inter-arrival times are on a microsecond scale to match the link capacity of 50 Gbps at high load. 
In our dynamic experiments, we synthetically create random drift to create variation in load (diurnal), workload mix, and objective mix to model the datacenter environment and study \name's ability to adapt.

\para{Baselines.}
In all experiments, we compare \name against strong online baselines for each objective: (i) TCP Reno and DCTCP for fair sharing, (ii) D2TCP for deadline awareness, (iii) pFabric as an idealized SRPT scheduler, (iv) 8-queue Priority Queueing as an SRPT-style restricted-datapath baseline, and (v) Sincronia for coflow completion time.

For Priority Queueing, we recompute queue thresholds before each run using flow-size percentiles to ensure accurate priority mappings. Since Sincronia lacks ns-3 support, we replay the same ns-3 arrival trace in its flow-level simulator to obtain relative flow ordering. We then use this flow ordering as pFabric priorities in ns-3 in the experiment. Sincronia solves a centralized ILP once per millisecond, scheduling all flows that arrived since the previous update. To favor Sincronia, we assume instantaneous communication and that the ILP completes within 1 ms, although such centralized optimization can incur substantially higher overhead at datacenter scale.

\begin{figure}[t]
    \centering
    \includegraphics[width=0.45\textwidth]{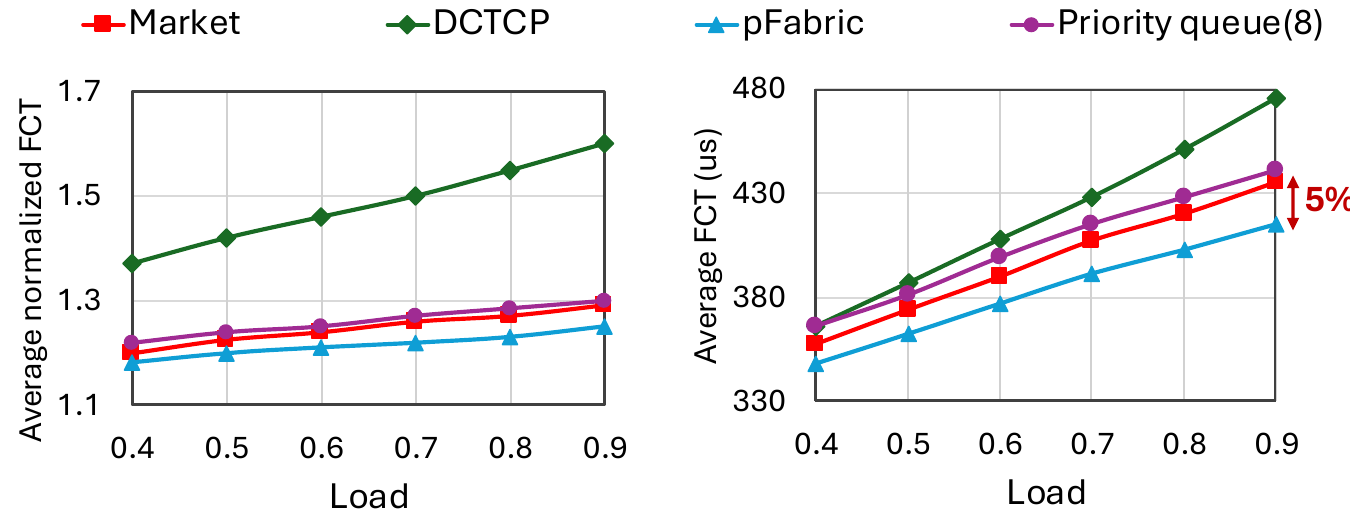}
    \vspace{-0.1cm}
    \caption{Single Policy Benchmark : Average FCT.}
    \label{fig:srpt_single}
\end{figure}
\vspace{0.1cm}
\begin{figure}[t]
    \centering
    \includegraphics[width=0.45\textwidth]{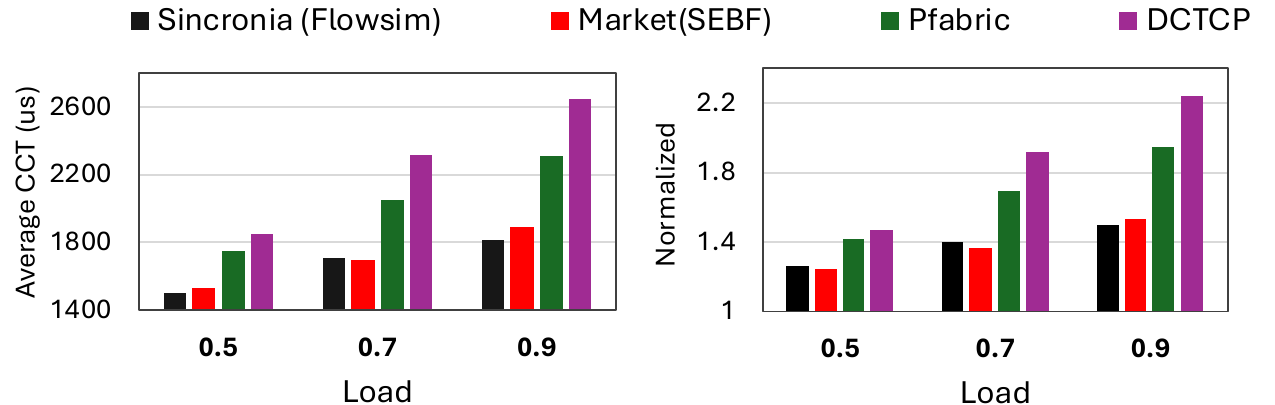}
    \vspace{-0.1cm}
    \caption{Single Policy Benchmark : Average CCT}
    \vspace{0.2cm}
    \label{fig:single_cflows}
\end{figure}
\begin{figure}[t]
    \centering
    \includegraphics[width=0.45\textwidth]{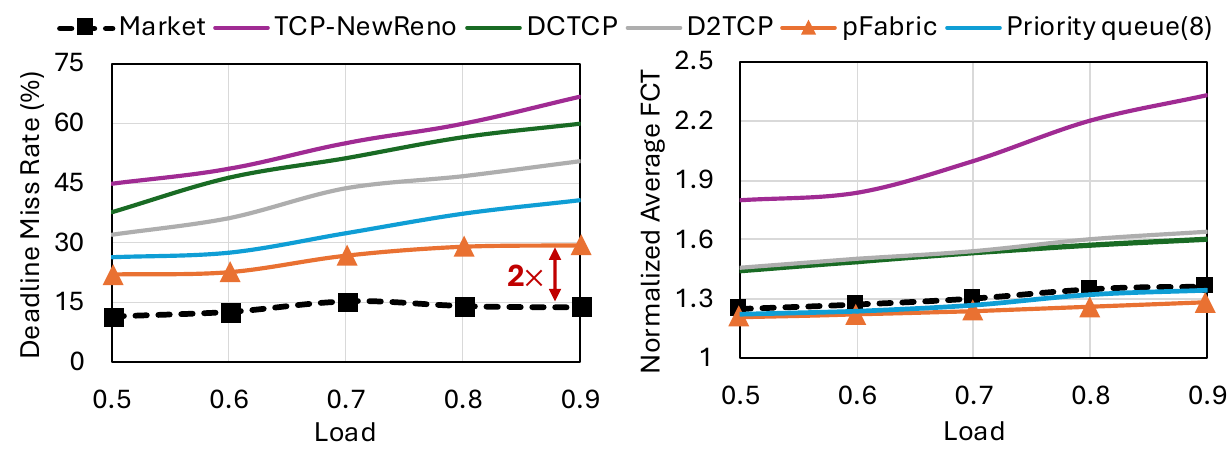}
    \caption{\textbf{Mixed-objective (deadline miss rate, average FCT)} in heterogeneous traffic mix. \name's deadline miss rate less than 15\% at all loads while average FCT is same as pFabric's }
    \label{fig:srpt_edf_2}
\end{figure}

\begin{figure}[t]
    \centering
    \includegraphics[width=0.45\textwidth]{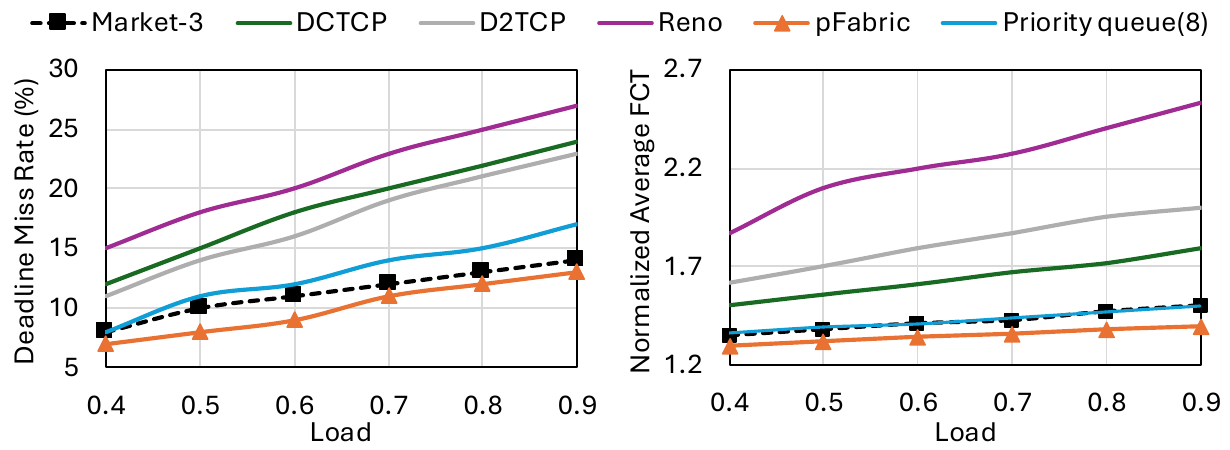}
    \caption{\textbf{Mixed-objective (deadline miss rate, average FCT)} in homogeneous traffic mix for varying load. \name is within 5\% of pFabric on both metrics even with a FIFO datapath.}
    \label{fig:srpt_edf_homo}
\end{figure}
\vspace{-0.2cm}
\subsection{Single-objective benchmarks}
\begin{figure*}
    \centering
    \includegraphics[width=0.95\textwidth]{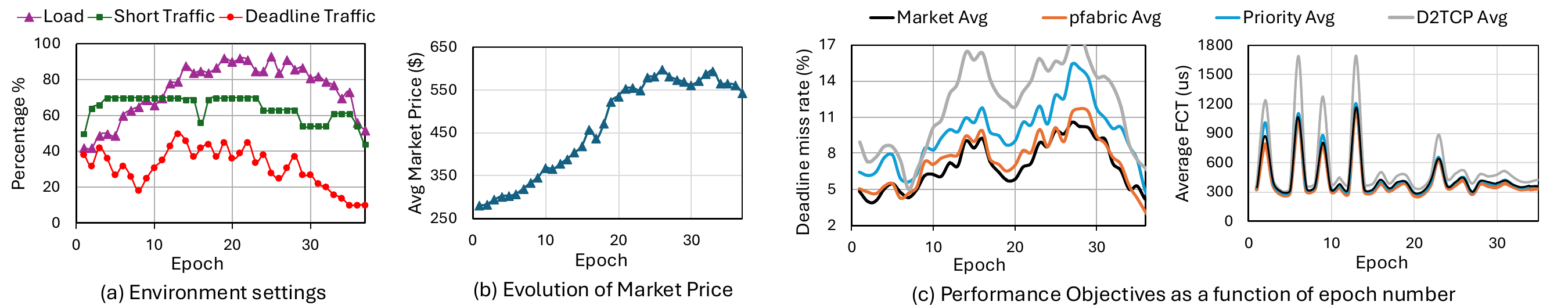}
    \caption{\textbf{Workload drift and mixed-objective performance (deadline miss rate, and average FCT) over 40 epochs.} \name achieves lowest deadline miss rate and average FCT of latency-sensitive traffic over all epochs. Market is stable and prices follow the drift in load.}
    \vspace{-0.2cm}
    \label{fig:srpt_edf_dynamic}
\end{figure*}
We first validate that \name can realize canonical single-objective schedulers when all flows share the same goal and same bidding API in \name.

\para{Latency-sensitive flows (SRPT).}
We generate websearch short traffic with Poisson arrivals and sweep load while measuring the average FCT of around 5000 flows. Flows arrive in bursts and use \name’s latency-driven bidding API (Appendix~\S\ref{sec:app_api}). We compare against pFabric, 8-queue Priority Queueing, and DCTCP. As shown in Figure~\ref{fig:srpt_single}, \name outperforms Priority Queueing and DCTCP, and remains within 5\% of pFabric despite its single FIFO datapath compared to pFabric’s idealized priority scheduling. This small gap is largely due to datapath differences and \name’s probe-based auction clearance. \looseness=-1

\para{Coflows.}
We use the Shortest Effective Bottleneck First (SEBF) bidding API for coflows. Each coflow in our synthetic workload contains a random number of concurrent flows with randomly sampled source--destination host pairs.
Flow sizes are drawn from the Uniform distribution in Table~\ref{tab:workloads}.

We compare \name against pFabric (SRPT), DCTCP (fair sharing), and Sincronia. Figure~\ref{fig:single_cflows} reports average coflow completion time (CCT) on (a) an absolute scale and (b) a normalized scale. 
Sincronia flow-level simulator solves a centralized ILP on  millisecond scale, while \name uses the lightweight app-centric bidding at RTT granularity (Appendix~\S\ref{sec:app_api}). Still \name performs similarly, sometimes better compared to Sincronia at all loads while significantly outperforming pFabric and DCTCP.

\subsection{Multi-objective benchmarks}
\begin{figure*}
    \centering
    \includegraphics[width=0.85\textwidth]
    {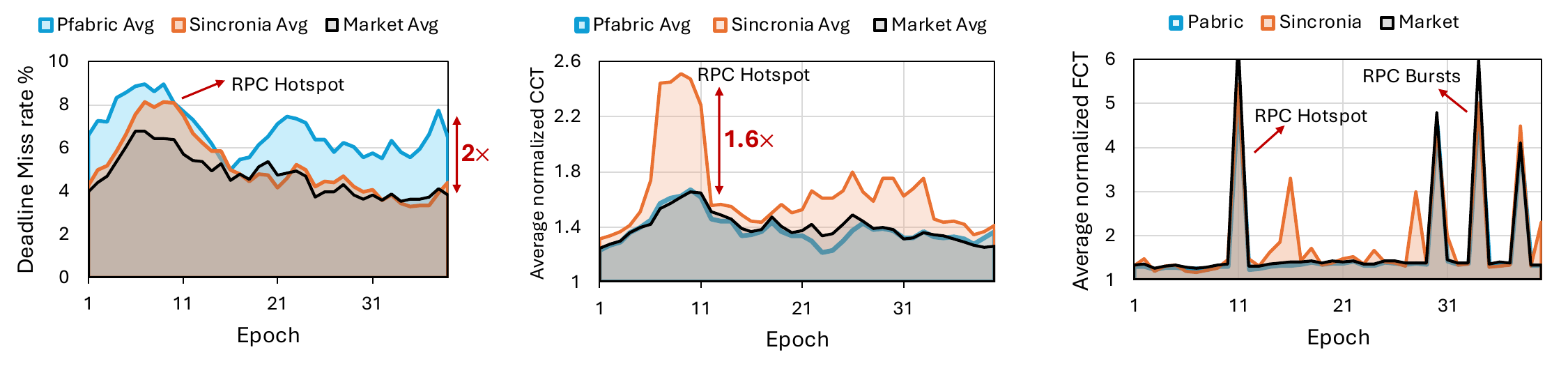}
    \vspace{-0.3cm}
    \caption{\textbf{Mixed-objective performance(deadline miss rate, average FCT and average CCT) over 40 epochs of drift.} \name achieves \textbf{2}$\times$ lower deadline miss rate and \textbf{1.6}$\times$ lower average CCT compared to pFabric and Sincronia with low average FCT for short flows even in bursts.}
    \vspace{-0.2cm}
    \label{fig:mixedcflow_dynamic}
\end{figure*}

\begin{figure}
    \centering
    \includegraphics[width=0.45\textwidth]{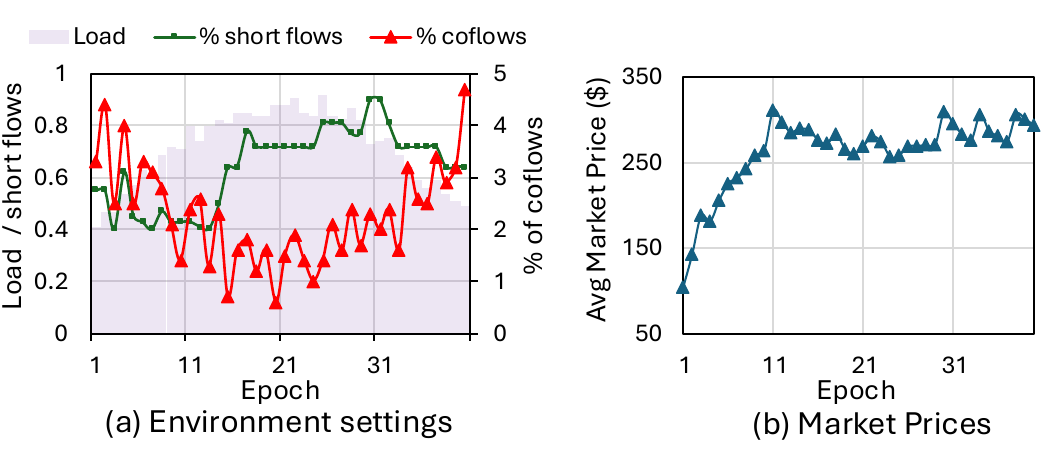}
    \caption{\textbf{Workload drift and market prices over 40 epochs.}}
    \label{fig:cflowmix_set}
\end{figure}

We next evaluate whether \name simultaneously delivers good performance across incompatible objectives.

\subsubsection{Latency-sensitive + Deadline-sensitive traffic.}
We consider mixes of latency-sensitive flows (objective: minimize average FCT) and deadline-sensitive flows (objective: minimize deadline miss rate). We evaluate three environments: (i) a heterogeneous mix designed to stress objective mismatch, (ii) a homogeneous random mix with many ultra-short flows to test datapath, and (iii) a drifting workload that changes load and mix over time.

\para{Heterogeneous mix.}
We generate a heterogeneous workload with 85\% short latency-sensitive flows sampled from the websearch distribution and 15\% deadline flows sampled from the Uniform distribution with randomly assigned slack (from 100 us to 1ms). We run simulations of \textasciitilde8000 flows at each load. We report metrics for the last \textasciitilde5000 flows, after \name's bidding agents have warmed up and the market prices have converged for the given load.

Figure~\ref{fig:srpt_edf_2} reports deadline miss rate and average FCT as load increases. \name matches pFabric’s normalized FCT while consistently achieving a much lower deadline miss rate than pFabric, Priority Queueing, and D2TCP. At high load, \name reduces misses by roughly $2\times$ versus pFabric, $3\times$ versus Priority Queueing, and $3.5\times$ versus D2TCP. This reflects objective mismatch in SRPT-style baselines and rate-control limits in D2TCP. \name's bidding API allows deadline flows to bid high as slack shrinks while latency flows dominate otherwise.

\para{Homogeneous random mix.}
We next consider a more homogeneous workload environment in which flow sizes are sampled across three distributions : websearch, datamining, and Uniform, with each contributing roughly 33\% of flows. This produces a workload mix dominated by ultra-short flows (\textasciitilde85\% shorter than 1 RTT), which strongly benefits schedulers with a powerful datapath (e.g. pFabric). We assign objectives independently at random: roughly half the flows are deadline-sensitive (with randomly sampled deadline slack) and half are latency-sensitive. Each run contains \textasciitilde8000 flows as before. Figure~\ref{fig:srpt_edf_homo} reports deadline miss rate and average FCT of the last 5000 flows. Despite being constrained to a FIFO datapath and incurring an additional RTT for auction probing on many flows, \name remains within \textasciitilde5\% of pFabric in average normalized FCT and within \textasciitilde1\% in deadline miss rate at high load, while substantially outperforming priority queueing, DCTCP, TCP-Reno, and D2TCP.

\para{Drift and varying workloads.}
Finally, we evaluate \name under dynamic datacenter drift using Websearch, Datamining, and Uniform workloads. Over 40 epochs, we vary offered load diurnally with noise, and drift both the fraction of short flows and the mix of deadline vs. latency objectives via a sticky random walk. To model bursts and hotspots, \textasciitilde15\% of epochs inject a burst of short flows (at $2\times$ the load), sometimes concentrated in a few racks to create incast hotspots. Each epoch is a simulation run containing thousands of flows. \name updates bidding policies once at the start of every epoch using the previous epoch’s global price ditribution, and we report performance on all flows without a separate warm-up of \name. In fact, we show that market prices track the slow drift perfectly due to timescale separation.

Figure~\ref{fig:srpt_edf_dynamic}(a,b) shows the resulting drift process with load, short traffic, and deadline traffic varying over epochs and market prices following the drift. Figure~\ref{fig:srpt_edf_dynamic}(b) shows that \name consistently achieves the lowest deadline miss rate across epochs, even beating pFabric, while maintaining average normalized FCT almost as pFabric, despite the majority traffic being short flows and even during sudden bursts and hotspots. \name also outperforms other baselines including priority queues, D2TCP on both objectives in all epochs.

\subsubsection{Latency + Deadline + Coflow traffic.}

We now add coflow traffic to the earlier mix of deadline and latency-sensitive traffic, and re-run the drift experiment over diurnally varying load, bursts, hotspots and varying percentages of short flows, deadline-sensitive flows, and coflows over a period of 40 epochs.Again, the bidding agents update their policies once at start of every epoch. Coflow sizes are sampled from Uniform distribution. 
The flows of a coflow are concurrent, but the source-destination pairs are sampled randomly. Note that a single coflow gives rise to tens and hundreds of flows, so the percentage of coflows in every epoch is less than 10\% of the total workload to ensure a good mix with other traffic. 
The datacenter drift specs and accordingly varying market prices are shown in Figure~\ref{fig:cflowmix_set}(a) and (b) respectively.

Figure~\ref{fig:mixedcflow_dynamic} shows that \name achieves lowest deadline miss rate (2$\times$ less than pFabric) outperforming both Sincronia and pFabric across epochs. In addition, \name also obtains low average flow completion time (FCT) for short flows and coflow completion time (CCT) for coflows simultaneously. Surprisingly, Sincronia---a coflow scheduling regimen---fails to reduce the average CCT for coflows. The reason is simple: Sincronia is designed for a coflow-only regime; the coflow abstraction considers a single flow as a coflow of size 1 and gives similar weight to all such "coflows". Although considering the completion time of all the flows (regardless of objective) Sincronia is near-optimal, but it delivers high average CCT compared to \name and pFabric for the actual multi-flow coflows. Sincronia also has occasional spikes in average FCT of short flows, especially during bursts party due to ms granularity of its flow-level simulator. \name performs best for all three metrics.
\vspace{-0.3cm}
\subsubsection{Latency-sensitive + Best-effort + Fairness traffic.}
We now mix latency-sensitive flows (objective: minimize average FCT) sampled from Websearch, fairness flows (objective: min bandwidth 5\%) sampled from Uniform and low-tier best effort flows (no objective) sampled from Alibaba, and evaluate the objectives performance at a high load of 90\%. To demonstrate the objective mismatch in baselines, we sweep the percentage of aggregate fairness + best effort traffic from low to high. Figure~\ref{fig:srpt_fair} plots the tail bandwidth achieved by fairness-sensitive flows and the average FCT of short flows. \name ensures that fairness flows meet their minimum bandwidth SLO while preserving low latency for short flows. In contrast, SRPT-based schedulers exhibit increasing average FCT across the sweep due to yielding to fairness and best-effort flows and objective mismatch at high \% of fairness and best effort traffic.\vspace{-0.3cm}
\begin{figure}[h]
    \centering
    \includegraphics[width=0.47\textwidth]{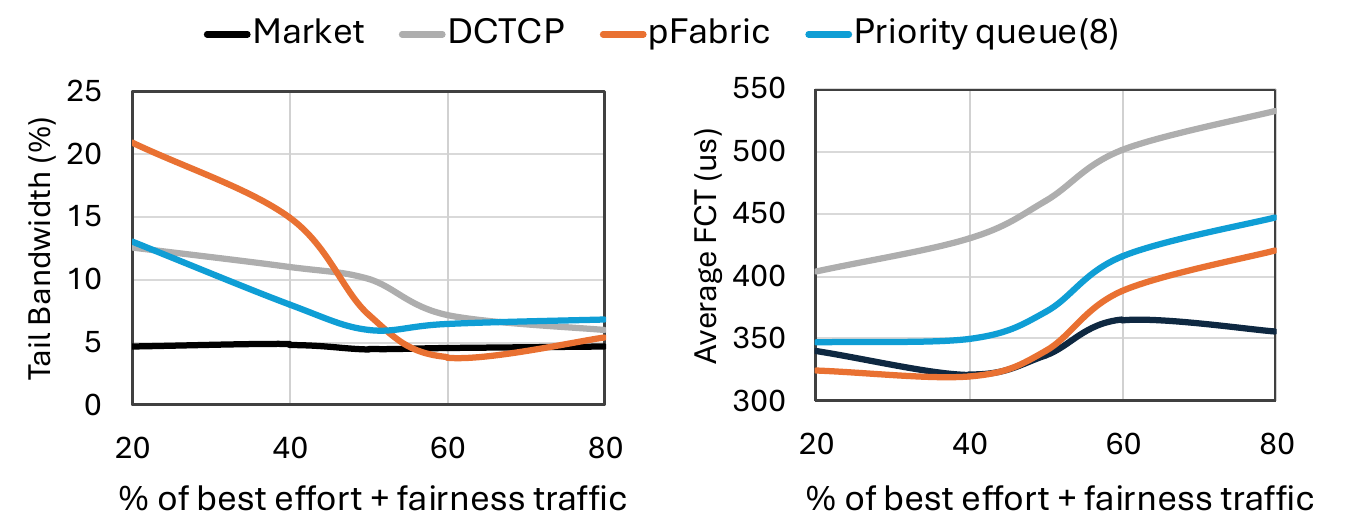}
    \caption{\textbf{Tail bandwidth and average FCT} in mix of latency-sensitive, fairness (5\% guaranteed throughput) and best-effort traffic.}
    \vspace{-0.2cm}
    \label{fig:srpt_fair}
\end{figure}

\vspace{0.3cm}
\section{Conclusion}
\label{sec:conclusion}
This paper introduced \name, a market-based scheduling primitive that enables the co-existence of heterogeneous objectives through lightweight, repeated per-link auctions. Flows encode objective-specific \emph{urgency} in their bids, while budgets capture their overall \emph{entitlement}. \name not only outperforms state-of-the-art schedulers, but also provides a flexible foundation for future objectives.

\para{Acknowledgements.}
This research was supported by NSF FMitF-2421734, NSF CAREER-2144766, NSF PPoSS-2217099, NSF CNS-2211382, NSF CNS-2403026,  Jane Street, and Sloan fellowship FG-2022-18504.
\begin{center}
\emph{In memory of Prateesh Goyal}
\end{center}

\label{bodypage}

\bibliographystyle{ACM-Reference-Format}
\begin{small}
\bibliography{refs}


\begin{thebibliography}{55}


\ifx \showCODEN    \undefined \def \showCODEN     #1{\unskip}     \fi
\ifx \showISBNx    \undefined \def \showISBNx     #1{\unskip}     \fi
\ifx \showISBNxiii \undefined \def \showISBNxiii  #1{\unskip}     \fi
\ifx \showISSN     \undefined \def \showISSN      #1{\unskip}     \fi
\ifx \showLCCN     \undefined \def \showLCCN      #1{\unskip}     \fi
\ifx \shownote     \undefined \def \shownote      #1{#1}          \fi
\ifx \showarticletitle \undefined \def \showarticletitle #1{#1}   \fi
\ifx \showURL      \undefined \def \showURL       {\relax}        \fi
\providecommand\bibfield[2]{#2}
\providecommand\bibinfo[2]{#2}
\providecommand\natexlab[1]{#1}
\providecommand\showeprint[2][]{arXiv:#2}

\bibitem[Agarwal et~al\mbox{.}(2018)]%
        {sincronia}
\bibfield{author}{\bibinfo{person}{Saksham Agarwal}, \bibinfo{person}{Shijin Rajakrishnan}, \bibinfo{person}{Akshay Narayan}, \bibinfo{person}{Rachit Agarwal}, \bibinfo{person}{David Shmoys}, {and} \bibinfo{person}{Amin Vahdat}.} \bibinfo{year}{2018}\natexlab{}.
\newblock \showarticletitle{Sincronia: near-optimal network design for coflows}. In \bibinfo{booktitle}{\emph{Proceedings of the 2018 Conference of the ACM Special Interest Group on Data Communication}} (Budapest, Hungary) \emph{(\bibinfo{series}{SIGCOMM '18})}. \bibinfo{publisher}{Association for Computing Machinery}, \bibinfo{address}{New York, NY, USA}, \bibinfo{pages}{16–29}.
\newblock
\showISBNx{9781450355674}
\href{https://doi.org/10.1145/3230543.3230569}{doi:\nolinkurl{10.1145/3230543.3230569}}


\bibitem[Al-Fares et~al\mbox{.}(2010)]%
        {hedera}
\bibfield{author}{\bibinfo{person}{Mohammad Al-Fares}, \bibinfo{person}{Sivasankar Radhakrishnan}, \bibinfo{person}{Barath Raghavan}, \bibinfo{person}{Nelson Huang}, {and} \bibinfo{person}{Amin Vahdat}.} \bibinfo{year}{2010}\natexlab{}.
\newblock \showarticletitle{Hedera: dynamic flow scheduling for data center networks}. In \bibinfo{booktitle}{\emph{Proceedings of the 7th USENIX Conference on Networked Systems Design and Implementation}} (San Jose, California) \emph{(\bibinfo{series}{NSDI'10})}. \bibinfo{publisher}{USENIX Association}, \bibinfo{address}{USA}, \bibinfo{pages}{19}.
\newblock


\bibitem[Alizadeh et~al\mbox{.}(2014)]%
        {conga}
\bibfield{author}{\bibinfo{person}{Mohammad Alizadeh}, \bibinfo{person}{Tom Edsall}, \bibinfo{person}{Suman Dharmapurikar}, \bibinfo{person}{Rahul Vaidya}, \bibinfo{person}{Kevin Chu}, \bibinfo{person}{Andy Fingerhut}, \bibinfo{person}{Vinh~The Lam}, \bibinfo{person}{Francis Matus}, \bibinfo{person}{Rong Pan}, \bibinfo{person}{Nandita Yadav}, {and} \bibinfo{person}{George Varghese}.} \bibinfo{year}{2014}\natexlab{}.
\newblock \showarticletitle{CONGA: Distributed Congestion-Aware Load Balancing for Datacenters}. In \bibinfo{booktitle}{\emph{Proceedings of the 2014 ACM Conference on SIGCOMM}} \emph{(\bibinfo{series}{SIGCOMM '14})}. \bibinfo{publisher}{Association for Computing Machinery}, \bibinfo{address}{New York, NY, USA}.
\newblock


\bibitem[Alizadeh et~al\mbox{.}(2010)]%
        {dctcp}
\bibfield{author}{\bibinfo{person}{Mohammad Alizadeh}, \bibinfo{person}{Albert Greenberg}, \bibinfo{person}{David~A. Maltz}, \bibinfo{person}{Jitendra Padhye}, \bibinfo{person}{Parveen Patel}, \bibinfo{person}{Balaji Prabhakar}, \bibinfo{person}{Sudipta Sengupta}, {and} \bibinfo{person}{Murari Sridharan}.} \bibinfo{year}{2010}\natexlab{}.
\newblock \showarticletitle{Data center TCP (DCTCP)}. In \bibinfo{booktitle}{\emph{Proceedings of the ACM SIGCOMM 2010 Conference}} (New Delhi, India) \emph{(\bibinfo{series}{SIGCOMM '10})}. \bibinfo{publisher}{Association for Computing Machinery}, \bibinfo{address}{New York, NY, USA}, \bibinfo{pages}{63–74}.
\newblock
\showISBNx{9781450302012}
\href{https://doi.org/10.1145/1851182.1851192}{doi:\nolinkurl{10.1145/1851182.1851192}}


\bibitem[Alizadeh et~al\mbox{.}(2013)]%
        {pfabric}
\bibfield{author}{\bibinfo{person}{Mohammad Alizadeh}, \bibinfo{person}{Shuang Yang}, \bibinfo{person}{Milad Sharif}, \bibinfo{person}{Sachin Katti}, \bibinfo{person}{Nick McKeown}, \bibinfo{person}{Balaji Prabhakar}, {and} \bibinfo{person}{Scott Shenker}.} \bibinfo{year}{2013}\natexlab{}.
\newblock \showarticletitle{pFabric: minimal near-optimal datacenter transport}. In \bibinfo{booktitle}{\emph{Proceedings of the ACM SIGCOMM 2013 Conference on SIGCOMM}} (Hong Kong, China) \emph{(\bibinfo{series}{SIGCOMM '13})}. \bibinfo{publisher}{Association for Computing Machinery}, \bibinfo{address}{New York, NY, USA}, \bibinfo{pages}{435–446}.
\newblock
\showISBNx{9781450320566}
\href{https://doi.org/10.1145/2486001.2486031}{doi:\nolinkurl{10.1145/2486001.2486031}}


\bibitem[Bai et~al\mbox{.}(2014)]%
        {pias}
\bibfield{author}{\bibinfo{person}{Wei Bai}, \bibinfo{person}{Li Chen}, \bibinfo{person}{Kai Chen}, \bibinfo{person}{Dongsu Han}, \bibinfo{person}{Chen Tian}, {and} \bibinfo{person}{Weicheng Sun}.} \bibinfo{year}{2014}\natexlab{}.
\newblock \showarticletitle{PIAS: Practical Information-Agnostic Flow Scheduling for Data Center Networks}. In \bibinfo{booktitle}{\emph{Proceedings of the 13th ACM Workshop on Hot Topics in Networks}} (Los Angeles, CA, USA) \emph{(\bibinfo{series}{HotNets-XIII})}. \bibinfo{publisher}{Association for Computing Machinery}, \bibinfo{address}{New York, NY, USA}, \bibinfo{pages}{1–7}.
\newblock
\showISBNx{9781450332569}
\href{https://doi.org/10.1145/2670518.2673871}{doi:\nolinkurl{10.1145/2670518.2673871}}


\bibitem[Bitsaki et~al\mbox{.}(2005)]%
        {psp_3}
\bibfield{author}{\bibinfo{person}{Marina Bitsaki}, \bibinfo{person}{George~D. Stamoulis}, {and} \bibinfo{person}{Costas Courcoubetis}.} \bibinfo{year}{2005}\natexlab{}.
\newblock \showarticletitle{A new strategy for bidding in the network-wide progressive second price auction for bandwidth}. In \bibinfo{booktitle}{\emph{Proceedings of the 2005 ACM Conference on Emerging Network Experiment and Technology}} (Toulouse, France) \emph{(\bibinfo{series}{CoNEXT '05})}. \bibinfo{publisher}{Association for Computing Machinery}, \bibinfo{address}{New York, NY, USA}, \bibinfo{pages}{146–155}.
\newblock
\showISBNx{159593197X}
\href{https://doi.org/10.1145/1095921.1095941}{doi:\nolinkurl{10.1145/1095921.1095941}}


\bibitem[Buckley et~al\mbox{.}(2025)]%
        {qos_dngrade}
\bibfield{author}{\bibinfo{person}{Matthew Buckley}, \bibinfo{person}{Parsa Pazhooheshy}, \bibinfo{person}{Z.~Morley Mao}, \bibinfo{person}{Nandita Dukkipati}, \bibinfo{person}{Hamid~Hajabdolali Bazzaz}, \bibinfo{person}{Priyaranjan Jha}, \bibinfo{person}{Yingjie Bi}, \bibinfo{person}{Steve Middlekauff}, {and} \bibinfo{person}{Yashar Ganjali}.} \bibinfo{year}{2025}\natexlab{}.
\newblock \showarticletitle{Learnings from Deploying Network {QoS} Alignment to Application Priorities for Storage Services}. In \bibinfo{booktitle}{\emph{22nd USENIX Symposium on Networked Systems Design and Implementation (NSDI 25)}}. \bibinfo{publisher}{USENIX Association}, \bibinfo{address}{Philadelphia, PA}, \bibinfo{pages}{37--53}.
\newblock
\showISBNx{978-1-939133-46-5}
\urldef\tempurl%
\url{https://www.usenix.org/conference/nsdi25/presentation/buckley}
\showURL{%
\tempurl}


\bibitem[Cao et~al\mbox{.}(2024)]%
        {crux}
\bibfield{author}{\bibinfo{person}{Jiamin Cao}, \bibinfo{person}{Yu Guan}, \bibinfo{person}{Kun Qian}, \bibinfo{person}{Jiaqi Gao}, \bibinfo{person}{Wencong Xiao}, \bibinfo{person}{Jianbo Dong}, \bibinfo{person}{Binzhang Fu}, \bibinfo{person}{Dennis Cai}, {and} \bibinfo{person}{Ennan Zhai}.} \bibinfo{year}{2024}\natexlab{}.
\newblock \showarticletitle{Crux: GPU-Efficient Communication Scheduling for Deep Learning Training}. In \bibinfo{booktitle}{\emph{Proceedings of the ACM SIGCOMM 2024 Conference}} (Sydney, NSW, Australia) \emph{(\bibinfo{series}{ACM SIGCOMM '24})}. \bibinfo{publisher}{Association for Computing Machinery}, \bibinfo{address}{New York, NY, USA}, \bibinfo{pages}{1–15}.
\newblock
\showISBNx{9798400706141}
\href{https://doi.org/10.1145/3651890.3672239}{doi:\nolinkurl{10.1145/3651890.3672239}}


\bibitem[Chen et~al\mbox{.}(2016)]%
        {karuna}
\bibfield{author}{\bibinfo{person}{Li Chen}, \bibinfo{person}{Kai Chen}, \bibinfo{person}{Wei Bai}, {and} \bibinfo{person}{Mohammad Alizadeh}.} \bibinfo{year}{2016}\natexlab{}.
\newblock \showarticletitle{Scheduling Mix-flows in Commodity Datacenters with Karuna}. In \bibinfo{booktitle}{\emph{Proceedings of the 2016 ACM SIGCOMM Conference}} (Florianopolis, Brazil) \emph{(\bibinfo{series}{SIGCOMM '16})}. \bibinfo{publisher}{Association for Computing Machinery}, \bibinfo{address}{New York, NY, USA}, \bibinfo{pages}{174–187}.
\newblock
\showISBNx{9781450341936}
\href{https://doi.org/10.1145/2934872.2934888}{doi:\nolinkurl{10.1145/2934872.2934888}}


\bibitem[Chowdhury and Stoica(2015)]%
        {aalo}
\bibfield{author}{\bibinfo{person}{Mosharaf Chowdhury} {and} \bibinfo{person}{Ion Stoica}.} \bibinfo{year}{2015}\natexlab{}.
\newblock \showarticletitle{Efficient Coflow Scheduling Without Prior Knowledge}. In \bibinfo{booktitle}{\emph{Proceedings of the 2015 ACM Conference on Special Interest Group on Data Communication}} (London, United Kingdom) \emph{(\bibinfo{series}{SIGCOMM '15})}. \bibinfo{publisher}{Association for Computing Machinery}, \bibinfo{address}{New York, NY, USA}, \bibinfo{pages}{393–406}.
\newblock
\showISBNx{9781450335423}
\href{https://doi.org/10.1145/2785956.2787480}{doi:\nolinkurl{10.1145/2785956.2787480}}


\bibitem[Chowdhury et~al\mbox{.}(2014)]%
        {varys}
\bibfield{author}{\bibinfo{person}{Mosharaf Chowdhury}, \bibinfo{person}{Yuan Zhong}, {and} \bibinfo{person}{Ion Stoica}.} \bibinfo{year}{2014}\natexlab{}.
\newblock \showarticletitle{Efficient coflow scheduling with Varys}.
\newblock \bibinfo{journal}{\emph{SIGCOMM Comput. Commun. Rev.}} \bibinfo{volume}{44}, \bibinfo{number}{4} (\bibinfo{date}{aug} \bibinfo{year}{2014}), \bibinfo{pages}{443–454}.
\newblock
\showISSN{0146-4833}
\href{https://doi.org/10.1145/2740070.2626315}{doi:\nolinkurl{10.1145/2740070.2626315}}


\bibitem[Cocchi et~al\mbox{.}(1991)]%
        {price_priority}
\bibfield{author}{\bibinfo{person}{Ron Cocchi}, \bibinfo{person}{Deborah Estrin}, \bibinfo{person}{Scott Shenker}, {and} \bibinfo{person}{Lixia Zhang}.} \bibinfo{year}{1991}\natexlab{}.
\newblock \showarticletitle{A study of priority pricing in multiple service class networks}. In \bibinfo{booktitle}{\emph{Conference on Applications, Technologies, Architectures, and Protocols for Computer Communication}}.
\newblock
\urldef\tempurl%
\url{https://api.semanticscholar.org/CorpusID:13446820}
\showURL{%
\tempurl}


\bibitem[Demers et~al\mbox{.}(1989a)]%
        {fairness_analysis}
\bibfield{author}{\bibinfo{person}{A. Demers}, \bibinfo{person}{S. Keshav}, {and} \bibinfo{person}{S. Shenker}.} \bibinfo{year}{1989}\natexlab{a}.
\newblock \showarticletitle{Analysis and simulation of a fair queueing algorithm}. In \bibinfo{booktitle}{\emph{Symposium Proceedings on Communications Architectures \&amp; Protocols}} (Austin, Texas, USA) \emph{(\bibinfo{series}{SIGCOMM '89})}. \bibinfo{publisher}{Association for Computing Machinery}, \bibinfo{address}{New York, NY, USA}, \bibinfo{pages}{1–12}.
\newblock
\showISBNx{0897913329}
\href{https://doi.org/10.1145/75246.75248}{doi:\nolinkurl{10.1145/75246.75248}}


\bibitem[Demers et~al\mbox{.}(1989b)]%
        {fair_queueing}
\bibfield{author}{\bibinfo{person}{A. Demers}, \bibinfo{person}{S. Keshav}, {and} \bibinfo{person}{S. Shenker}.} \bibinfo{year}{1989}\natexlab{b}.
\newblock \showarticletitle{Analysis and simulation of a fair queueing algorithm}.
\newblock \bibinfo{journal}{\emph{SIGCOMM Comput. Commun. Rev.}} \bibinfo{volume}{19}, \bibinfo{number}{4} (\bibinfo{date}{Aug.} \bibinfo{year}{1989}), \bibinfo{pages}{1–12}.
\newblock
\showISSN{0146-4833}
\href{https://doi.org/10.1145/75247.75248}{doi:\nolinkurl{10.1145/75247.75248}}


\bibitem[Di et~al\mbox{.}(2013)]%
        {gcloud}
\bibfield{author}{\bibinfo{person}{Sheng Di}, \bibinfo{person}{Derrick Kondo}, {and} \bibinfo{person}{Franck Cappello}.} \bibinfo{year}{2013}\natexlab{}.
\newblock \showarticletitle{Characterizing Cloud Applications on a Google Data Center}. In \bibinfo{booktitle}{\emph{Proceedings of the 2013 42nd International Conference on Parallel Processing}} \emph{(\bibinfo{series}{ICPP '13})}. \bibinfo{publisher}{IEEE Computer Society}, \bibinfo{address}{USA}, \bibinfo{pages}{468–473}.
\newblock
\showISBNx{9780769551173}
\href{https://doi.org/10.1109/ICPP.2013.56}{doi:\nolinkurl{10.1109/ICPP.2013.56}}


\bibitem[Dukkipati et~al\mbox{.}(2005)]%
        {rcp}
\bibfield{author}{\bibinfo{person}{Nandita Dukkipati}, \bibinfo{person}{Masayoshi Kobayashi}, \bibinfo{person}{Rui Zhang-Shen}, {and} \bibinfo{person}{Nick McKeown}.} \bibinfo{year}{2005}\natexlab{}.
\newblock \showarticletitle{Processor Sharing Flows in the Internet}. In \bibinfo{booktitle}{\emph{Quality of Service -- IWQoS 2005}}, \bibfield{editor}{\bibinfo{person}{Hermann de~Meer} {and} \bibinfo{person}{Nina Bhatti}} (Eds.). \bibinfo{publisher}{Springer Berlin Heidelberg}, \bibinfo{address}{Berlin, Heidelberg}, \bibinfo{pages}{271--285}.
\newblock
\showISBNx{978-3-540-31659-6}


\bibitem[Giannakou et~al\mbox{.}(2025)]%
        {hpc_qos}
\bibfield{author}{\bibinfo{person}{Anna Giannakou}, \bibinfo{person}{Jonathan Skone}, \bibinfo{person}{Vinay Sawal}, \bibinfo{person}{Ronal Kumar}, \bibinfo{person}{Stephen Simms}, \bibinfo{person}{Nicholas Wright}, {and} \bibinfo{person}{Lavanya Ramakrishnan}.} \bibinfo{year}{2025}\natexlab{}.
\newblock \showarticletitle{Implementing Network-level QoS at HPC Datacenters to Enable Distributed Scientific Workflows}. In \bibinfo{booktitle}{\emph{Proceedings of the SC '25 Workshops of the International Conference for High Performance Computing, Networking, Storage and Analysis}} \emph{(\bibinfo{series}{SC Workshops '25})}. \bibinfo{publisher}{Association for Computing Machinery}, \bibinfo{address}{New York, NY, USA}, \bibinfo{pages}{919–928}.
\newblock
\showISBNx{9798400718717}
\href{https://doi.org/10.1145/3731599.3767450}{doi:\nolinkurl{10.1145/3731599.3767450}}


\bibitem[Handley et~al\mbox{.}(2017)]%
        {ndp}
\bibfield{author}{\bibinfo{person}{Mark Handley}, \bibinfo{person}{Costin Raiciu}, \bibinfo{person}{Alexandru Agache}, \bibinfo{person}{Andrei Voinescu}, \bibinfo{person}{Andrew~W. Moore}, \bibinfo{person}{Gianni Antichi}, {and} \bibinfo{person}{Marcin W\'{o}jcik}.} \bibinfo{year}{2017}\natexlab{}.
\newblock \showarticletitle{Re-architecting datacenter networks and stacks for low latency and high performance}. In \bibinfo{booktitle}{\emph{Proceedings of the Conference of the ACM Special Interest Group on Data Communication}} (Los Angeles, CA, USA) \emph{(\bibinfo{series}{SIGCOMM '17})}. \bibinfo{publisher}{Association for Computing Machinery}, \bibinfo{address}{New York, NY, USA}, \bibinfo{pages}{29–42}.
\newblock
\showISBNx{9781450346535}
\href{https://doi.org/10.1145/3098822.3098825}{doi:\nolinkurl{10.1145/3098822.3098825}}


\bibitem[He et~al\mbox{.}(2015)]%
        {presto}
\bibfield{author}{\bibinfo{person}{Xin He}, \bibinfo{person}{Hongqiang~Harry Zhang}, \bibinfo{person}{Yuliang Li}, \bibinfo{person}{Weitao Wang}, \bibinfo{person}{Hongqiang~Harry Liu}, {and} \bibinfo{person}{Kai Chen}.} \bibinfo{year}{2015}\natexlab{}.
\newblock \showarticletitle{Presto: Edge-based Load Balancing for Fast Datacenter Networks}. In \bibinfo{booktitle}{\emph{Proceedings of the 2015 ACM Conference on SIGCOMM}} \emph{(\bibinfo{series}{SIGCOMM '15})}. \bibinfo{publisher}{Association for Computing Machinery}, \bibinfo{address}{New York, NY, USA}.
\newblock


\bibitem[Hong et~al\mbox{.}(2012)]%
        {premtive_schdle}
\bibfield{author}{\bibinfo{person}{Chi-Yao Hong}, \bibinfo{person}{Matthew Caesar}, {and} \bibinfo{person}{P.~Brighten Godfrey}.} \bibinfo{year}{2012}\natexlab{}.
\newblock \showarticletitle{Finishing flows quickly with preemptive scheduling}.
\newblock \bibinfo{journal}{\emph{SIGCOMM Comput. Commun. Rev.}} \bibinfo{volume}{42}, \bibinfo{number}{4} (\bibinfo{date}{aug} \bibinfo{year}{2012}), \bibinfo{pages}{127–138}.
\newblock
\showISSN{0146-4833}
\href{https://doi.org/10.1145/2377677.2377710}{doi:\nolinkurl{10.1145/2377677.2377710}}


\bibitem[Hong et~al\mbox{.}(2013)]%
        {swan}
\bibfield{author}{\bibinfo{person}{Chi-Yao Hong}, \bibinfo{person}{Srikanth Kandula}, \bibinfo{person}{Ratul Mahajan}, \bibinfo{person}{Ming Zhang}, \bibinfo{person}{Vijay Gill}, \bibinfo{person}{Mohan Nanduri}, {and} \bibinfo{person}{Roger Wattenhofer}.} \bibinfo{year}{2013}\natexlab{}.
\newblock \showarticletitle{Achieving high utilization with software-driven WAN}.
\newblock \bibinfo{journal}{\emph{SIGCOMM Comput. Commun. Rev.}} \bibinfo{volume}{43}, \bibinfo{number}{4} (\bibinfo{date}{Aug.} \bibinfo{year}{2013}), \bibinfo{pages}{15–26}.
\newblock
\showISSN{0146-4833}
\href{https://doi.org/10.1145/2534169.2486012}{doi:\nolinkurl{10.1145/2534169.2486012}}


\bibitem[Jacobson(1988)]%
        {vjacobson_cc}
\bibfield{author}{\bibinfo{person}{V. Jacobson}.} \bibinfo{year}{1988}\natexlab{}.
\newblock \showarticletitle{Congestion avoidance and control}.
\newblock \bibinfo{journal}{\emph{SIGCOMM Comput. Commun. Rev.}} \bibinfo{volume}{18}, \bibinfo{number}{4} (\bibinfo{date}{Aug.} \bibinfo{year}{1988}), \bibinfo{pages}{314–329}.
\newblock
\showISSN{0146-4833}
\href{https://doi.org/10.1145/52325.52356}{doi:\nolinkurl{10.1145/52325.52356}}


\bibitem[Katabi et~al\mbox{.}(2002)]%
        {xcp}
\bibfield{author}{\bibinfo{person}{Dina Katabi}, \bibinfo{person}{Mark Handley}, {and} \bibinfo{person}{Charlie Rohrs}.} \bibinfo{year}{2002}\natexlab{}.
\newblock \showarticletitle{Congestion control for high bandwidth-delay product networks}.
\newblock \bibinfo{journal}{\emph{SIGCOMM Comput. Commun. Rev.}} \bibinfo{volume}{32}, \bibinfo{number}{4} (\bibinfo{date}{aug} \bibinfo{year}{2002}), \bibinfo{pages}{89–102}.
\newblock
\showISSN{0146-4833}
\href{https://doi.org/10.1145/964725.633035}{doi:\nolinkurl{10.1145/964725.633035}}


\bibitem[Kelly(2003)]%
        {num-kelly}
\bibfield{author}{\bibinfo{person}{Frank Kelly}.} \bibinfo{year}{2003}\natexlab{}.
\newblock \showarticletitle{Fairness and Stability of End-to-End Congestion Control*}.
\newblock \bibinfo{journal}{\emph{European Journal of Control}} \bibinfo{volume}{9}, \bibinfo{number}{2} (\bibinfo{year}{2003}), \bibinfo{pages}{159--176}.
\newblock
\showISSN{0947-3580}
\href{https://doi.org/10.3166/ejc.9.159-176}{doi:\nolinkurl{10.3166/ejc.9.159-176}}


\bibitem[Kumar et~al\mbox{.}(2015)]%
        {bwe}
\bibfield{author}{\bibinfo{person}{Alok Kumar}, \bibinfo{person}{Sushant Jain}, \bibinfo{person}{Uday Naik}, \bibinfo{person}{Anand Raghuraman}, \bibinfo{person}{Nikhil Kasinadhuni}, \bibinfo{person}{Enrique~Cauich Zermeno}, \bibinfo{person}{C.~Stephen Gunn}, \bibinfo{person}{Jing Ai}, \bibinfo{person}{Bj\"{o}rn Carlin}, \bibinfo{person}{Mihai Amarandei-Stavila}, \bibinfo{person}{Mathieu Robin}, \bibinfo{person}{Aspi Siganporia}, \bibinfo{person}{Stephen Stuart}, {and} \bibinfo{person}{Amin Vahdat}.} \bibinfo{year}{2015}\natexlab{}.
\newblock \showarticletitle{BwE: Flexible, Hierarchical Bandwidth Allocation for WAN Distributed Computing}. In \bibinfo{booktitle}{\emph{Proceedings of the 2015 ACM Conference on Special Interest Group on Data Communication}} (London, United Kingdom) \emph{(\bibinfo{series}{SIGCOMM '15})}. \bibinfo{publisher}{Association for Computing Machinery}, \bibinfo{address}{New York, NY, USA}, \bibinfo{pages}{1–14}.
\newblock
\showISBNx{9781450335423}
\href{https://doi.org/10.1145/2785956.2787478}{doi:\nolinkurl{10.1145/2785956.2787478}}


\bibitem[Lazar and Semret(1999a)]%
        {psp_for_network}
\bibfield{author}{\bibinfo{person}{Aurel Lazar} {and} \bibinfo{person}{Nemo Semret}.} \bibinfo{year}{1999}\natexlab{a}.
\newblock \showarticletitle{Design and Analysis of the Progressive Second Price Auction for Network Bandwidth Sharing}.
\newblock  (\bibinfo{date}{11} \bibinfo{year}{1999}).
\newblock


\bibitem[Lazar and Semret(1999b)]%
        {psp2}
\bibfield{author}{\bibinfo{person}{Aurel Lazar} {and} \bibinfo{person}{Nemo Semret}.} \bibinfo{year}{1999}\natexlab{b}.
\newblock \showarticletitle{The Progressive Second Price Auction Mechanism for Network Resource Sharing}.
\newblock \bibinfo{journal}{\emph{International Symposium on Dynamic Games and Applications,}} (\bibinfo{date}{05} \bibinfo{year}{1999}).
\newblock


\bibitem[Li et~al\mbox{.}(2024)]%
        {qclimb}
\bibfield{author}{\bibinfo{person}{Wenxin Li}, \bibinfo{person}{Xin He}, \bibinfo{person}{Yuan Liu}, \bibinfo{person}{Keqiu Li}, \bibinfo{person}{Kai Chen}, \bibinfo{person}{Zhao Ge}, \bibinfo{person}{Zewei Guan}, \bibinfo{person}{Heng Qi}, \bibinfo{person}{Song Zhang}, {and} \bibinfo{person}{Guyue Liu}.} \bibinfo{year}{2024}\natexlab{}.
\newblock \showarticletitle{Flow Scheduling with Imprecise Knowledge}. In \bibinfo{booktitle}{\emph{21st USENIX Symposium on Networked Systems Design and Implementation (NSDI 24)}}. \bibinfo{publisher}{USENIX Association}, \bibinfo{address}{Santa Clara, CA}, \bibinfo{pages}{95--111}.
\newblock
\showISBNx{978-1-939133-39-7}
\urldef\tempurl%
\url{https://www.usenix.org/conference/nsdi24/presentation/li-wenxin}
\showURL{%
\tempurl}


\bibitem[Li et~al\mbox{.}(2019)]%
        {hpcc}
\bibfield{author}{\bibinfo{person}{Yuliang Li}, \bibinfo{person}{Rui Miao}, \bibinfo{person}{Hongqiang~Harry Liu}, \bibinfo{person}{Yan Zhuang}, \bibinfo{person}{Fei Feng}, \bibinfo{person}{Lingbo Tang}, \bibinfo{person}{Zheng Cao}, \bibinfo{person}{Ming Zhang}, \bibinfo{person}{Frank Kelly}, \bibinfo{person}{Mohammad Alizadeh}, {and} \bibinfo{person}{Minlan Yu}.} \bibinfo{year}{2019}\natexlab{}.
\newblock \showarticletitle{HPCC: high precision congestion control}. In \bibinfo{booktitle}{\emph{Proceedings of the ACM Special Interest Group on Data Communication}} (Beijing, China) \emph{(\bibinfo{series}{SIGCOMM '19})}. \bibinfo{publisher}{Association for Computing Machinery}, \bibinfo{address}{New York, NY, USA}, \bibinfo{pages}{44–58}.
\newblock
\showISBNx{9781450359566}
\href{https://doi.org/10.1145/3341302.3342085}{doi:\nolinkurl{10.1145/3341302.3342085}}


\bibitem[Maill{\'e} and Tuffin(2004)]%
        {psp_4}
\bibfield{author}{\bibinfo{person}{Patrick Maill{\'e}} {and} \bibinfo{person}{Bruno Tuffin}.} \bibinfo{year}{2004}\natexlab{}.
\newblock \showarticletitle{Multi-bid Versus Progressive Second Price Auctions in a Stochastic Environment}. In \bibinfo{booktitle}{\emph{Quality of Service in the Emerging Networking Panorama}}, \bibfield{editor}{\bibinfo{person}{Josep Sol{\'e}-Pareta}, \bibinfo{person}{Michael Smirnov}, \bibinfo{person}{Piet Van~Mieghem}, \bibinfo{person}{Jordi Domingo-Pascual}, \bibinfo{person}{Edmundo Monteiro}, \bibinfo{person}{Peter Reichl}, \bibinfo{person}{Burkhard Stiller}, {and} \bibinfo{person}{Richard~J. Gibbens}} (Eds.). \bibinfo{publisher}{Springer Berlin Heidelberg}, \bibinfo{address}{Berlin, Heidelberg}, \bibinfo{pages}{318--327}.
\newblock
\showISBNx{978-3-540-30193-6}


\bibitem[Maillé(2007)]%
        {psp_1}
\bibfield{author}{\bibinfo{person}{Patrick Maillé}.} \bibinfo{year}{2007}\natexlab{}.
\newblock \showarticletitle{Market Clearing Price and Equilibria of the Progressive Second Price Mechanism}.
\newblock \bibinfo{journal}{\emph{http://dx.doi.org/10.1051/ro:2007030}}  \bibinfo{volume}{41} (\bibinfo{date}{10} \bibinfo{year}{2007}).
\newblock
\href{https://doi.org/10.1051/ro:2007030}{doi:\nolinkurl{10.1051/ro:2007030}}


\bibitem[Montazeri et~al\mbox{.}(2018)]%
        {homa}
\bibfield{author}{\bibinfo{person}{Behnam Montazeri}, \bibinfo{person}{Yilong Li}, \bibinfo{person}{Mohammad Alizadeh}, {and} \bibinfo{person}{John Ousterhout}.} \bibinfo{year}{2018}\natexlab{}.
\newblock \showarticletitle{Homa: a receiver-driven low-latency transport protocol using network priorities}. In \bibinfo{booktitle}{\emph{Proceedings of the 2018 Conference of the ACM Special Interest Group on Data Communication}} (Budapest, Hungary) \emph{(\bibinfo{series}{SIGCOMM '18})}. \bibinfo{publisher}{Association for Computing Machinery}, \bibinfo{address}{New York, NY, USA}, \bibinfo{pages}{221–235}.
\newblock
\showISBNx{9781450355674}
\href{https://doi.org/10.1145/3230543.3230564}{doi:\nolinkurl{10.1145/3230543.3230564}}


\bibitem[Nathan et~al\mbox{.}(2019)]%
        {minerva}
\bibfield{author}{\bibinfo{person}{Vikram Nathan}, \bibinfo{person}{Vibhaalakshmi Sivaraman}, \bibinfo{person}{Ravichandra Addanki}, \bibinfo{person}{Mehrdad Khani}, \bibinfo{person}{Prateesh Goyal}, {and} \bibinfo{person}{Mohammad Alizadeh}.} \bibinfo{year}{2019}\natexlab{}.
\newblock \showarticletitle{End-to-end transport for video QoE fairness}. In \bibinfo{booktitle}{\emph{Proceedings of the ACM Special Interest Group on Data Communication}} (Beijing, China) \emph{(\bibinfo{series}{SIGCOMM '19})}. \bibinfo{publisher}{Association for Computing Machinery}, \bibinfo{address}{New York, NY, USA}, \bibinfo{pages}{408–423}.
\newblock
\showISBNx{9781450359566}
\href{https://doi.org/10.1145/3341302.3342077}{doi:\nolinkurl{10.1145/3341302.3342077}}


\bibitem[Ousterhout et~al\mbox{.}(2017)]%
        {flexplane}
\bibfield{author}{\bibinfo{person}{Amy Ousterhout}, \bibinfo{person}{Jonathan Perry}, \bibinfo{person}{Hari Balakrishnan}, {and} \bibinfo{person}{Petr Lapukhov}.} \bibinfo{year}{2017}\natexlab{}.
\newblock \showarticletitle{Flexplane: An Experimentation Platform for Resource Management in Datacenters}. In \bibinfo{booktitle}{\emph{14th USENIX Symposium on Networked Systems Design and Implementation (NSDI 17)}}. \bibinfo{publisher}{USENIX Association}, \bibinfo{address}{Boston, MA}, \bibinfo{pages}{438--451}.
\newblock
\showISBNx{978-1-931971-37-9}
\urldef\tempurl%
\url{https://www.usenix.org/conference/nsdi17/technical-sessions/presentation/ousterhout}
\showURL{%
\tempurl}


\bibitem[Perry et~al\mbox{.}(2014)]%
        {fastpass}
\bibfield{author}{\bibinfo{person}{Jonathan Perry}, \bibinfo{person}{Amy Ousterhout}, \bibinfo{person}{Hari Balakrishnan}, \bibinfo{person}{Devavrat Shah}, {and} \bibinfo{person}{Hans Fugal}.} \bibinfo{year}{2014}\natexlab{}.
\newblock \showarticletitle{Fastpass: a centralized "zero-queue" datacenter network}. In \bibinfo{booktitle}{\emph{Proceedings of the 2014 ACM Conference on SIGCOMM}} (Chicago, Illinois, USA) \emph{(\bibinfo{series}{SIGCOMM '14})}. \bibinfo{publisher}{Association for Computing Machinery}, \bibinfo{address}{New York, NY, USA}, \bibinfo{pages}{307–318}.
\newblock
\showISBNx{9781450328364}
\href{https://doi.org/10.1145/2619239.2626309}{doi:\nolinkurl{10.1145/2619239.2626309}}


\bibitem[Popa et~al\mbox{.}(2012)]%
        {faircloud}
\bibfield{author}{\bibinfo{person}{Lucian Popa}, \bibinfo{person}{Gautam Kumar}, \bibinfo{person}{Mosharaf Chowdhury}, \bibinfo{person}{Arvind Krishnamurthy}, \bibinfo{person}{Sylvia Ratnasamy}, {and} \bibinfo{person}{Ion Stoica}.} \bibinfo{year}{2012}\natexlab{}.
\newblock \showarticletitle{FairCloud: sharing the network in cloud computing}.
\newblock \bibinfo{journal}{\emph{SIGCOMM Comput. Commun. Rev.}} \bibinfo{volume}{42}, \bibinfo{number}{4} (\bibinfo{date}{Aug.} \bibinfo{year}{2012}), \bibinfo{pages}{187–198}.
\newblock
\showISSN{0146-4833}
\href{https://doi.org/10.1145/2377677.2377717}{doi:\nolinkurl{10.1145/2377677.2377717}}


\bibitem[Qureshi et~al\mbox{.}(2022)]%
        {plb}
\bibfield{author}{\bibinfo{person}{Mubashir~Adnan Qureshi}, \bibinfo{person}{Yuchung Cheng}, \bibinfo{person}{Qianwen Yin}, \bibinfo{person}{Qiaobin Fu}, \bibinfo{person}{Gautam Kumar}, \bibinfo{person}{Masoud Moshref}, \bibinfo{person}{Junhua Yan}, \bibinfo{person}{Van Jacobson}, \bibinfo{person}{David Wetherall}, {and} \bibinfo{person}{Abdul Kabbani}.} \bibinfo{year}{2022}\natexlab{}.
\newblock \showarticletitle{PLB: congestion signals are simple and effective for network load balancing}. In \bibinfo{booktitle}{\emph{Proceedings of the ACM SIGCOMM 2022 Conference}}. \bibinfo{pages}{207--218}.
\newblock


\bibitem[Radunovi\'{c} and Boudec(2007)]%
        {max_min}
\bibfield{author}{\bibinfo{person}{Bozidar Radunovi\'{c}} {and} \bibinfo{person}{Jean-Yves~Le Boudec}.} \bibinfo{year}{2007}\natexlab{}.
\newblock \showarticletitle{A unified framework for max-min and min-max fairness with applications}.
\newblock \bibinfo{journal}{\emph{IEEE/ACM Trans. Netw.}} \bibinfo{volume}{15}, \bibinfo{number}{5} (\bibinfo{date}{Oct.} \bibinfo{year}{2007}), \bibinfo{pages}{1073–1083}.
\newblock
\showISSN{1063-6692}
\href{https://doi.org/10.1109/TNET.2007.896231}{doi:\nolinkurl{10.1109/TNET.2007.896231}}


\bibitem[Rajasekaran et~al\mbox{.}(2024a)]%
        {cassini_nsdi24}
\bibfield{author}{\bibinfo{person}{Sudarsanan Rajasekaran}, \bibinfo{person}{Manya Ghobadi}, {and} \bibinfo{person}{Aditya Akella}.} \bibinfo{year}{2024}\natexlab{a}.
\newblock \showarticletitle{{CASSINI}: {Network-Aware} Job Scheduling in Machine Learning Clusters}. In \bibinfo{booktitle}{\emph{21st USENIX Symposium on Networked Systems Design and Implementation (NSDI 24)}}. \bibinfo{publisher}{USENIX Association}, \bibinfo{address}{Santa Clara, CA}, \bibinfo{pages}{1403--1420}.
\newblock
\showISBNx{978-1-939133-39-7}
\urldef\tempurl%
\url{https://www.usenix.org/conference/nsdi24/presentation/rajasekaran}
\showURL{%
\tempurl}


\bibitem[Rajasekaran et~al\mbox{.}(2022)]%
        {cassini_hotnets}
\bibfield{author}{\bibinfo{person}{Sudarsanan Rajasekaran}, \bibinfo{person}{Manya Ghobadi}, \bibinfo{person}{Gautam Kumar}, {and} \bibinfo{person}{Aditya Akella}.} \bibinfo{year}{2022}\natexlab{}.
\newblock \showarticletitle{{Congestion Control in Machine Learning Clusters}}. In \bibinfo{booktitle}{\emph{Proceedings of the 21st ACM Workshop on Hot Topics in Networks}} (Austin, Texas) \emph{(\bibinfo{series}{HotNets '22})}. \bibinfo{pages}{235–242}.
\newblock


\bibitem[Rajasekaran et~al\mbox{.}(2024b)]%
        {mltcp_hotnets}
\bibfield{author}{\bibinfo{person}{Sudarsanan Rajasekaran}, \bibinfo{person}{Sanjoli Narang}, \bibinfo{person}{Anton~A. Zabreyko}, {and} \bibinfo{person}{Manya Ghobadi}.} \bibinfo{year}{2024}\natexlab{b}.
\newblock \showarticletitle{MLTCP: A Distributed Technique to Approximate Centralized Flow Scheduling For Machine Learning}. In \bibinfo{booktitle}{\emph{Proceedings of the 23rd ACM Workshop on Hot Topics in Networks}} (Irvine, CA, USA) \emph{(\bibinfo{series}{HotNets '24})}. \bibinfo{publisher}{Association for Computing Machinery}, \bibinfo{address}{New York, NY, USA}, \bibinfo{pages}{167–176}.
\newblock
\showISBNx{9798400712722}
\href{https://doi.org/10.1145/3696348.3696878}{doi:\nolinkurl{10.1145/3696348.3696878}}


\bibitem[Rajasekaran et~al\mbox{.}(2024c)]%
        {mltcp_arxiv}
\bibfield{author}{\bibinfo{person}{Sudarsanan Rajasekaran}, \bibinfo{person}{Sanjoli Narang}, \bibinfo{person}{Anton~A. Zabreyko}, {and} \bibinfo{person}{Manya Ghobadi}.} \bibinfo{year}{2024}\natexlab{c}.
\newblock \bibinfo{title}{MLTCP: Congestion Control for DNN Training}.
\newblock
\showeprint[arxiv]{2402.09589}~[cs.NI]
\urldef\tempurl%
\url{https://arxiv.org/abs/2402.09589}
\showURL{%
\tempurl}


\bibitem[Roy et~al\mbox{.}(2015)]%
        {fb_dc}
\bibfield{author}{\bibinfo{person}{Arjun Roy}, \bibinfo{person}{Hongyi Zeng}, \bibinfo{person}{Jasmeet Bagga}, \bibinfo{person}{George Porter}, {and} \bibinfo{person}{Alex~C. Snoeren}.} \bibinfo{year}{2015}\natexlab{}.
\newblock \showarticletitle{Inside the Social Network's (Datacenter) Network}.
\newblock \bibinfo{journal}{\emph{SIGCOMM Comput. Commun. Rev.}} \bibinfo{volume}{45}, \bibinfo{number}{4} (\bibinfo{date}{Aug.} \bibinfo{year}{2015}), \bibinfo{pages}{123–137}.
\newblock
\showISSN{0146-4833}
\href{https://doi.org/10.1145/2829988.2787472}{doi:\nolinkurl{10.1145/2829988.2787472}}


\bibitem[Schrage(1968)]%
        {SRPT_optimality_1968}
\bibfield{author}{\bibinfo{person}{Linus Schrage}.} \bibinfo{year}{1968}\natexlab{}.
\newblock \showarticletitle{Letter to the Editor—A Proof of the Optimality of the Shortest Remaining Processing Time Discipline}.
\newblock \bibinfo{journal}{\emph{Operations Research}} \bibinfo{volume}{16}, \bibinfo{number}{3} (\bibinfo{year}{1968}), \bibinfo{pages}{687--690}.
\newblock
\href{https://doi.org/10.1287/opre.16.3.687}{doi:\nolinkurl{10.1287/opre.16.3.687}}


\bibitem[Schrage and Miller(1966)]%
        {SRPT_1966}
\bibfield{author}{\bibinfo{person}{Linus~E. Schrage} {and} \bibinfo{person}{Louis~W. Miller}.} \bibinfo{year}{1966}\natexlab{}.
\newblock \showarticletitle{The Queue M/G/1 with the Shortest Remaining Processing Time Discipline}.
\newblock \bibinfo{journal}{\emph{Operations Research}} \bibinfo{volume}{14}, \bibinfo{number}{4} (\bibinfo{year}{1966}), \bibinfo{pages}{670--684}.
\newblock
\href{https://doi.org/10.1287/opre.14.4.670}{doi:\nolinkurl{10.1287/opre.14.4.670}}


\bibitem[Shenker et~al\mbox{.}(1997)]%
        {rfc_qos}
\bibfield{author}{\bibinfo{person}{S. Shenker}, \bibinfo{person}{C. Partridge}, {and} \bibinfo{person}{R. Guerin}.} \bibinfo{year}{1997}\natexlab{}.
\newblock \bibinfo{title}{RFC2212: Specification of Guaranteed Quality of Service}.
\newblock


\bibitem[Tuffin(2002)]%
        {psp_2}
\bibfield{author}{\bibinfo{person}{Bruno Tuffin}.} \bibinfo{year}{2002}\natexlab{}.
\newblock \showarticletitle{Revisited Progressive Second Price Auction for Charging Telecommunication Networks}.
\newblock \bibinfo{journal}{\emph{Telecommunication Systems}} \bibinfo{volume}{20}, \bibinfo{number}{3} (\bibinfo{date}{07} \bibinfo{year}{2002}), \bibinfo{pages}{255--263}.
\newblock
\showISSN{1572-9451}
\href{https://doi.org/10.1023/A:1016545228543}{doi:\nolinkurl{10.1023/A:1016545228543}}


\bibitem[Vamanan et~al\mbox{.}(2012)]%
        {d2tcp}
\bibfield{author}{\bibinfo{person}{Balajee Vamanan}, \bibinfo{person}{Jahangir Hasan}, {and} \bibinfo{person}{T.N. Vijaykumar}.} \bibinfo{year}{2012}\natexlab{}.
\newblock \showarticletitle{Deadline-aware datacenter tcp (D2TCP)}. In \bibinfo{booktitle}{\emph{Proceedings of the ACM SIGCOMM 2012 Conference on Applications, Technologies, Architectures, and Protocols for Computer Communication}} (Helsinki, Finland) \emph{(\bibinfo{series}{SIGCOMM '12})}. \bibinfo{publisher}{Association for Computing Machinery}, \bibinfo{address}{New York, NY, USA}, \bibinfo{pages}{115–126}.
\newblock
\showISBNx{9781450314190}
\href{https://doi.org/10.1145/2342356.2342388}{doi:\nolinkurl{10.1145/2342356.2342388}}


\bibitem[Wellman et~al\mbox{.}(2001)]%
        {mop}
\bibfield{author}{\bibinfo{person}{Michael~P. Wellman}, \bibinfo{person}{William~E. Walsh}, \bibinfo{person}{Peter~R. Wurman}, {and} \bibinfo{person}{Jeffrey~K. MacKie-Mason}.} \bibinfo{year}{2001}\natexlab{}.
\newblock \showarticletitle{Auction Protocols for Decentralized Scheduling}.
\newblock \bibinfo{journal}{\emph{Games and Economic Behavior}} \bibinfo{volume}{35}, \bibinfo{number}{1} (\bibinfo{year}{2001}), \bibinfo{pages}{271--303}.
\newblock
\showISSN{0899-8256}
\href{https://doi.org/10.1006/game.2000.0822}{doi:\nolinkurl{10.1006/game.2000.0822}}


\bibitem[Wilson et~al\mbox{.}(2011)]%
        {d3}
\bibfield{author}{\bibinfo{person}{Christo Wilson}, \bibinfo{person}{Hitesh Ballani}, \bibinfo{person}{Thomas Karagiannis}, {and} \bibinfo{person}{Ant Rowtron}.} \bibinfo{year}{2011}\natexlab{}.
\newblock \showarticletitle{Better never than late: meeting deadlines in datacenter networks}. In \bibinfo{booktitle}{\emph{Proceedings of the ACM SIGCOMM 2011 Conference}} (Toronto, Ontario, Canada) \emph{(\bibinfo{series}{SIGCOMM '11})}. \bibinfo{publisher}{Association for Computing Machinery}, \bibinfo{address}{New York, NY, USA}, \bibinfo{pages}{50–61}.
\newblock
\showISBNx{9781450307970}
\href{https://doi.org/10.1145/2018436.2018443}{doi:\nolinkurl{10.1145/2018436.2018443}}


\bibitem[Xia et~al\mbox{.}(2021)]%
        {mocc_ieee}
\bibfield{author}{\bibinfo{person}{Zhenchang Xia}, \bibinfo{person}{Yanjiao Chen}, \bibinfo{person}{Libing Wu}, \bibinfo{person}{Yu-Cheng Chou}, \bibinfo{person}{Zhicong Zheng}, \bibinfo{person}{Haoyang Li}, {and} \bibinfo{person}{Baochun Li}.} \bibinfo{year}{2021}\natexlab{}.
\newblock \showarticletitle{A Multi-objective Reinforcement Learning Perspective on Internet Congestion Control}. In \bibinfo{booktitle}{\emph{2021 IEEE/ACM 29th International Symposium on Quality of Service (IWQOS)}}. \bibinfo{pages}{1--10}.
\newblock
\href{https://doi.org/10.1109/IWQOS52092.2021.9521291}{doi:\nolinkurl{10.1109/IWQOS52092.2021.9521291}}


\bibitem[Zhang et~al\mbox{.}(2022)]%
        {aqua}
\bibfield{author}{\bibinfo{person}{Yiwen Zhang}, \bibinfo{person}{Gautam Kumar}, \bibinfo{person}{Nandita Dukkipati}, \bibinfo{person}{Xian Wu}, \bibinfo{person}{Priyaranjan Jha}, \bibinfo{person}{Mosharaf Chowdhury}, {and} \bibinfo{person}{Amin Vahdat}.} \bibinfo{year}{2022}\natexlab{}.
\newblock \showarticletitle{Aequitas: admission control for performance-critical RPCs in datacenters}. In \bibinfo{booktitle}{\emph{Proceedings of the ACM SIGCOMM 2022 Conference}} (Amsterdam, Netherlands) \emph{(\bibinfo{series}{SIGCOMM '22})}. \bibinfo{publisher}{Association for Computing Machinery}, \bibinfo{address}{New York, NY, USA}, \bibinfo{pages}{1–18}.
\newblock
\showISBNx{9781450394208}
\href{https://doi.org/10.1145/3544216.3544271}{doi:\nolinkurl{10.1145/3544216.3544271}}


\bibitem[Zhang et~al\mbox{.}(2025)]%
        {virtual_qos}
\bibfield{author}{\bibinfo{person}{Zhaochen Zhang}, \bibinfo{person}{Feiyang Xue}, \bibinfo{person}{Keqiang He}, \bibinfo{person}{Zhimeng Yin}, \bibinfo{person}{Gianni Antichi}, \bibinfo{person}{Jiaqi Gao}, \bibinfo{person}{Yizhi Wang}, \bibinfo{person}{Rui Ning}, \bibinfo{person}{Haixin Nan}, \bibinfo{person}{Xu Zhang}, \bibinfo{person}{Peirui Cao}, \bibinfo{person}{Xiaoliang Wang}, \bibinfo{person}{Wanchun Dou}, \bibinfo{person}{Guihai Chen}, {and} \bibinfo{person}{Chen Tian}.} \bibinfo{year}{2025}\natexlab{}.
\newblock \showarticletitle{Enabling Virtual Priority in Data Center Congestion Control}. In \bibinfo{booktitle}{\emph{Proceedings of the Twentieth European Conference on Computer Systems}} (Rotterdam, Netherlands) \emph{(\bibinfo{series}{EuroSys '25})}. \bibinfo{publisher}{Association for Computing Machinery}, \bibinfo{address}{New York, NY, USA}, \bibinfo{pages}{396–412}.
\newblock
\showISBNx{9798400711961}
\href{https://doi.org/10.1145/3689031.3717463}{doi:\nolinkurl{10.1145/3689031.3717463}}


\bibitem[Zhu et~al\mbox{.}(2015)]%
        {dcqcn}
\bibfield{author}{\bibinfo{person}{Yibo Zhu}, \bibinfo{person}{Haggai Eran}, \bibinfo{person}{Daniel Firestone}, \bibinfo{person}{Chuanxiong Guo}, \bibinfo{person}{Marina Lipshteyn}, \bibinfo{person}{Yehonatan Liron}, \bibinfo{person}{Jitendra Padhye}, \bibinfo{person}{Shachar Raindel}, \bibinfo{person}{Mohamad~Haj Yahia}, {and} \bibinfo{person}{Ming Zhang}.} \bibinfo{year}{2015}\natexlab{}.
\newblock \showarticletitle{Congestion Control for Large-Scale RDMA Deployments}. In \bibinfo{booktitle}{\emph{Proceedings of the 2015 ACM Conference on Special Interest Group on Data Communication}} (London, United Kingdom) \emph{(\bibinfo{series}{SIGCOMM '15})}. \bibinfo{publisher}{Association for Computing Machinery}, \bibinfo{address}{New York, NY, USA}, \bibinfo{pages}{523–536}.
\newblock
\showISBNx{9781450335423}
\href{https://doi.org/10.1145/2785956.2787484}{doi:\nolinkurl{10.1145/2785956.2787484}}


\end{thebibliography}
\end{small}
\appendix
\clearpage
\nobalance

\appendix
\section{Truthfulness in Repeated Second-Price Auctions}
\label{sec:app_ic}
\begin{theorem}[Myopic Truthfulness under Price-Taking Behaviour]
If the market is price-taking, then bidding myopic marginal utility $b_i(h_t)=\Delta_0 (h_t)$ in each round $t$ is a weakly dominant strategy.
\end{theorem}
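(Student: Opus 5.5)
The plan is to reduce each round of the repeated auction to a single-shot second-price auction with a suitably defined valuation, and then invoke the standard weak-dominance argument recalled in \S\ref{sec:trust_auctions}. Fix bidder $i$, a round $t$, and a history $h_t$. Write $V^{w}(h_t)$ for the expected net utility from round $t$ onward if $i$ wins round $t$, excluding the round-$t$ payment, and $V^{l}(h_t)$ for the same quantity if $i$ loses. Both values are computed assuming $i$ plays optimally, or according to any fixed continuation policy, from round $t+1$ on. The myopic marginal utility is then $\Delta_0(h_t)=V^{w}(h_t)-V^{l}(h_t)$, which matches the decomposition in Equation~\ref{eq:marg_util}: immediate rewards $r_w-r_l$ plus the difference in continuation utilities.

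The first step is to use the price-taking hypothesis (Equation~\ref{eq:exo}) to show that neither $V^{w}(h_t)$ nor $V^{l}(h_t)$ depends on the bid $b_i$ submitted in round $t$. The bid enters the round-$t$ outcome in only two ways: through the win/lose indicator, and through future clearing prices. The indicator is already conditioned on in the definitions of $V^{w}$ and $V^{l}$. Future clearing prices $P_{t'}$, for $t'>t$, are independent of $b_i(h_t)$ by assumption. The bidder's own state transition $S_{t+1}\mid S_t,\text{win/lose}$ depends only on the outcome. The payment in round $t$ is $P_t$, the highest competing bid, and is not a function of $b_i$. Deferring charges to flow completion does not change the total amount charged, so it does not alter this accounting.

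With these facts, the utility of bidding $b$ at round $t$ is
\[
\mathbb{E}\big[\mathbb{I}\{b>P_t\}(V^{w}(h_t)-P_t)+\mathbb{I}\{b\le P_t\}V^{l}(h_t)\big]
= V^{l}(h_t)+\mathbb{E}\big[\mathbb{I}\{b>P_t\}(\Delta_0(h_t)-P_t)\big].
\]
The second term is maximized pointwise in $P_t$ by $b=\Delta_0(h_t)$. This bid wins exactly when $\Delta_0-P_t>0$, which is the single-shot second-price argument. Pointwise optimality holds for every realization of the competing bids, so the choice is weakly dominant regardless of how others bid. Ties can be handled by either convention, since at a tie the bidder is indifferent.

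The final step extends the argument from one round to the whole strategy. The plan is a backward-induction or one-shot-deviation argument: for any continuation policy, replacing round $t$'s bid with $\Delta_0(h_t)$, computed relative to that continuation, does not lower utility. Applying this round by round shows that truthful marginal bidding in every round weakly dominates any alternative strategy. I expect this step to be the main obstacle. $V^{w}$ and $V^{l}$ depend on the continuation policy, so I need to make precise that $\Delta_0$ is defined with respect to the truthful continuation itself. For infinite or unbounded horizons I would also need a terminating-flow assumption, namely finite $N_i$, so that the induction is well founded. The price-taking independence must also hold conditionally on the history, not just marginally, and I would state that explicitly as the interpretation of Equation~\ref{eq:exo}.
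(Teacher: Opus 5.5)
Your single-round step is the same as the paper's. Writing the round-$t$ utility as $V^{l}(h_t)+\mathbb{E}[\mathbb{I}\{b>P_t\}(\Delta_0(h_t)-P_t)]$ is its reduced form for $J_i^{\pi}(h_k)$ with $C(h_k)=L^{\pi}_{k+1}+r_l(h_k)$, and your pointwise comparison is its one-round deviation $G^k_\pi$.

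You differ in the passage from one round to the whole policy, and there your backward ordering is the one that works. The paper chains deviations \emph{forward}: $G^0_\pi, G^1_{G^0_\pi},\dots$. But $\Delta_{\pi'}(h_k)$ depends only on $\pi'$ from round $k+1$ on, and the forward chain has not yet touched those rounds. So the bid it installs at round $k$ is $\Delta_\pi(h_k)$, computed against the \emph{original} $\pi$. The limit of that chain bids $\Delta_\pi$ everywhere and depends on $\pi$. In general it is not the self-consistent $\Delta_0$ the paper names. It shows that every $\pi$ can be improved, not that one policy beats them all. Your version fixes this on a horizon of $T$ rounds. Let $\sigma_k$ follow $\pi$ before round $k$ and, from round $k$ on, bid marginal utility relative to its own already-truthful continuation. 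Then $J^{\pi}=J^{\sigma_{T+1}}\le J^{\sigma_T}\le\cdots\le J^{\sigma_0}$, with an endpoint that does not depend on $\pi$, which is what the theorem needs.

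The one thing to fix is your termination hypothesis. Finite $N_i$ does not make the induction well founded, because a lost round does not consume $N_i$. The FCT recursion in \S\ref{sec:app_api} has $U(S_t)$ on both sides, so the number of rounds is unbounded even for a finite flow. There are two repairs:
\begin{itemize}
\item Use a genuine finite horizon, which deadline slack supplies. Otherwise, truncate at horizon $T$ and add a continuity-at-infinity argument (bounded per-round payoffs, almost-sure completion) to let $T\to\infty$.
\item For Markov objectives, induct on remaining service $F$ instead of on time. Given the truthful $U(F-1)$, state $F$ is a single-state problem with a self-loop on losses, and its optimal value solves a scalar fixed point. Your pointwise argument then shows the maximizing bid is $U(F-1)-U(F)-r_l(F)$, which is exactly self-consistent truthfulness. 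Here $r_l<0$ rules out never-winning plays.
\end{itemize}

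One caveat, shared with the paper: what this argument proves is optimality against the exogenous price process, since $\Delta_0$ depends on the law of future prices. So ``regardless of how others bid'' holds only for the current round's competing bids.
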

\begin{proof}
Consider a bidder $i$ participating in repeated sealed-bid second-price auctions until it wins $N_i$ rounds. Let $k$ denote the current auction round and let bidder $i$ submit bid
\[
b_i^k=\pi(h_k),
\]
where $h_k$ denotes the bidder's observed history up to round $k$, consisting only of internal state information (excluding any bid-dependent feedback). For Markov policies, $h_k$ reduces to the state $S_k$. Let
\[
M_k=\max_{j\neq i} b_j^k
\]
be the highest competing bid (i.e., the price paid upon winning).

\medskip
\noindent
\textbf{Exogenous price assumption.}
We assume prices are exogenous in the sense that deviating in round $k$ does not affect the distribution of future clearing prices $\{M_t\}_{t>k}$ (conditioned on the same observed history).

\medskip
\noindent

\textbf{Expected continuation utilities and notation.}
Fix any baseline bidding policy $\pi$ for bidder $i$.
At each round $k$, the bid submitted under $\pi$ is $b_i^k=\pi(h_k)$.
For any policy $\pi'$, let $W_{k+1}^{\pi'}$ and $L_{k+1}^{\pi'}$ denote bidder $i$'s expected \emph{net} utility from rounds $(k\!+\!1)$ onward if bidder $i$ wins and loses the current ($k^{th}$) round, respectively, conditioned on the observed history $h_k$ and assuming it continues with $\pi'$ thereafter.
(For clarity: these continuation utilities already incorporate all future payments under the exogenous price process.)
Note that $h_{k+1}$ differs depending on whether bidder $i$ wins the current auction or not even for the same prior history $h_k$.

Let $r_w(h_k)$ and $r_l(h_k)$ denote the immediate valuation obtained in the current round under win and loss, respectively.
Define the one-step marginal willingness to pay at round $k$ under continuation policy $\pi'$ as
\begin{equation}
\label{eq:mar_contin_clean}
\Delta_{\pi'}(h_k)
=
\big(W_{k+1}^{\pi'}+r_w(h_k)\big)
-
\big(L_{k+1}^{\pi'}+r_l(h_k)\big).
\end{equation}
In particular, under the baseline policy $\pi$ we write $\Delta_{\pi}(h_k)$.

\medskip
\noindent
\textbf{Net utility from round $k$.}
Let $J_i^{\pi}(h_k)$ denote bidder $i$'s expected net utility from round $k$ onward under policy $\pi$. Conditioning on the outcome of the current auction yields
\begin{equation}
\label{eq:J_clean}
\begin{split}
J_i^{\pi}(h_k)
&=
\Big(W_{k+1}^{\pi}+r_w(h_k)\Big)
\Pr(\pi(h_k)>M_k\mid h_k)\\
&\quad+
\Big(L_{k+1}^{\pi}+r_l(h_k)\Big)
\Pr(\pi(h_k)\le M_k\mid h_k)\\
&\quad-
\mathbb{E}\!\left[M_k\,\mathbb{I}\{\pi(h_k)>M_k\}\mid h_k\right].
\end{split}
\end{equation}

Substituting \eqref{eq:mar_contin_clean} into \eqref{eq:J_clean} gives
\begin{equation}
\label{eq:J_reduced}
J_i^{\pi}(h_k)
=
\mathbb{E}\!\left[(\Delta_{\pi}(h_k)-M_k)
\mathbb{I}\{\pi(h_k)>M_k\}\mid h_k\right]
+C(h_k),
\end{equation}
where $C(h_k)$ collects terms independent of the current bid $\pi(h_k)$.
\begin{equation}
\begin{split}
    & C(h_k) = L_{k+1}^{\pi} + r_l(h_k)
\end{split}
\end{equation}
Both $L_{k+1}^{\pi}$ and $r_l(h_k)$ are determined by $h_k$ and the event of losing the current auction, and do not depend on the current bid.

\medskip
\noindent
\textbf{Single-round deviation.}
Now define policy $G_{\pi}^k$ that coincides with $\pi$ everywhere except at round $k$, where it bids $\Delta_{\pi}(h_k)$.
Under exogenous prices, the distribution of future clearing prices $\{M_t\}_{t>k}$ is unaffected by this $k^{th}$ round bid deviation, implying that $C(h_k)$ remains unchanged given history $h_k$.
Thus the only difference between $J_i^{G_{\pi}^k}$ and $J_i^{\pi}$ is the outcome at round $k$, yielding
\begin{equation}
\label{eq:comp}
\begin{split}
 & \implies J_i^{G_{\pi}^k}(k) - J_i^{\pi}(k) = \\
 & \mathbb{E} \left[ (\Delta_{\pi}(h_k) - M_k) \left[ \mathbb{I}(\Delta_{\pi}(h_k) > M_k) - \mathbb{I}(\pi(h_k) > M_k) \right] \mid h_k \right]
\end{split}
\end{equation}
From equation~\ref{eq:comp}, we have $J_i^{G_{\pi}^k}(h_k)-J_i^{\pi}(h_k)\ge 0$.

Since the policies $\pi$ and $G_{\pi}^k$ coincide in all rounds prior to $k$, they induce the same realized history and accumulate identical utility up to round $k$.
Therefore, the only possible difference in net utility arises from their behavior at round $k$ and beyond. Evaluating utility from the beginning of the auction sequence,
\begin{equation}
J_i^{G_{\pi}^k}-J_i^{\pi}
=
J_i^{G_{\pi}^k}(h_k)-J_i^{\pi}(h_k)
\ge 0.
\end{equation}

Hence, replacing $\pi$ by $G_{\pi}^k$ weakly improves bidder $i$'s net utility.
Equivalently, viewing $\pi$ as the baseline policy $G_{\pi}^{-1}$, we have shown that for any round $k$,
\begin{equation}
\label{eq:mainresult_clean}
J_i^{G_{\pi}^k} \;\ge\; J_i^{G_{\pi}^{-1}},
\qquad \forall k\ge 0.
\end{equation}

\textbf{Recursive application.}
We now apply this argument recursively across rounds.
Start with the policy $G_{\pi}^0$, which coincides with $\pi$ except that it bids $\Delta_{\pi}(h_0)$ at round $0$.
Next, define $G_{G_{\pi}^0}^1$ to coincide with $G_{\pi}^0$ except that it also bids $\Delta_{G_{\pi}^0}(h_1)$ at round $1$, and so on.
The subscript in $\Delta_{\pi'}(\cdot)$ tracks the continuation policy used to define the marginal continuation utility; after we modify earlier rounds, the continuation policy (and hence the appropriate $\Delta_{\pi'}$) changes accordingly.
By \eqref{eq:mainresult_clean}, each step weakly improves utility.

Let $\Delta_0$ denote the limiting policy obtained by applying this replacement at every round, i.e., the policy that bids $b_i^k=\Delta_0(h_k)$ where $\Delta_0(h_k)$ is the marginal continuation utility induced by the (fully updated) policy itself.
Then $\Delta_0$ weakly dominates $\pi$.
\end{proof}

\newpage
\section{Bidding API}
\label{sec:app_api}
Using the objective-utility model and dynamic-programming method in Section~\S\ref{sec:rl}, we summarize bidding policies for common objectives. We write $\mathcal{F}$ for the CDF of the (global) second-price distribution. Let $S_t$ denote the bidder state at round $t$, $b(S_t)$ its bid, and $P_t$ the highest competing bid (i.e., the second price) drawn from $\mathcal{F}$. Winning occurs when $P_t < b(S_t)$, so $\Pr(\text{win})=\mathcal{F}(b(S_t))$.
\\~\\

\textbf{(i) Flow Completion Time (FCT).}
\begin{itemize}
    \item State variable $S_t$ is remaining FCT in units of RTT.
    \item Objective value $o$ is equal to $S_t$.
    \item Example utility: $u(o) = C - w\times (o - \frac{o^2}{2T})$, where $w,T$ determine the budget/penalty scale.
    \item Derivative utility: $\frac{du}{do} = -w\times (1 - \frac{o}{T})$.
    \item Immediate rewards: $r_w(S_t) = 0$ and $r_l(S_t) = -w\times (1 - \frac{S_t}{T})$ (a loss increases remaining FCT by one RTT).
\end{itemize}
The state-value recursion is:
\begin{equation}
\begin{split}
U(S_t)
&= \Big(U(S_t - 1) - \mathbb{E}[P_t\mid P_t < b(S_t)]\Big) \times \Pr(P_t < b(S_t))\\
&\quad + \Big(U(S_t) + r_l(S_t)\Big) \times \Pr(P_t \ge b(S_t)).
\end{split}
\end{equation}

Equivalently,
\begin{equation}
\label{eq:srpt_u}
\begin{split}
U(S_t)
&= \Big(U(S_t - 1) - \mathbb{E}[P_t\mid P_t < b(S_t)]\Big) \times \mathcal{F}(b(S_t))\\
&\quad + \Big(U(S_t) + r_l(S_t)\Big) \times \big(1 - \mathcal{F}(b(S_t))\big).
\end{split}
\end{equation}

The bid is the marginal continuation utility:
\begin{equation}
    b(S_t) = \Delta U(S_t) = U(S_t - 1) - U(S_t) - r_l(S_t).
\end{equation}

This admits a closed-form characterization (no dynamic programming needed):
\begin{equation}
    \label{eq:srpt_bid}
    \boxed{\textcolor{blue}{\int_{0}^{b(S_t)} \mathcal{F}(x)\,dx = r_l(S_t)}}
\end{equation}

Equivalently, maximizing $U(S_t)$ with respect to $b(S_t)$ in Equation~\ref{eq:srpt_u} yields the same condition, verifying that bidding the marginal utility is a dominant strategy in the induced per-round second-price auction.
\\~\\

\textbf{(ii) Deadline.}
\begin{itemize}
    \item State variable $S_t$ is (remaining FCT, deadline slack), written as $(F,D)$ in units of RTT.
    \item Objective value is non-zero only upon meeting the deadline: reward $C$.
    \item Immediate rewards: $r_w(F,D)=0$ and $r_l(F,D)=0$ everywhere except at $F=0$, where $r_w(0,D)=C$.
\end{itemize}
The state-value recursion is:
\begin{equation}
\label{eq:dl_u}
\begin{split}
U(F,D)
&= \Big(U(F-1, D) + r_w(F,D)\Big) \times \mathcal{F}(b(S_t))\\
&\quad + U(F-1, D-1) \times \big(1 - \mathcal{F}(b(S_t))\big).
\end{split}
\end{equation}
The marginal continuation utility is:
\begin{equation}
    b(F,D) = \Delta U(F,D) = U(F-1,D) - U(F-1, D-1) + r_w(F,D).
\end{equation}
The above equations are solvable using 2-D dynamic programming.
\\~\\

\textbf{(iii) Fairness.}
\begin{itemize}
    \item State variable $S_t$ is EWMA-averaged bandwidth (EWMA factor $\alpha$).
    \item Objective value $o$ is equal to $S_t$.
    \item Example utility: $u(o) = w\times \log(o)$, where $w$ determines the budget scale.
    \item Derivative utility: $\frac{du}{do} = \frac{w}{o}$.
    \item Upon win, $o$ increases by $\alpha(1-o)$; upon loss it decreases by $\alpha o$.
    \item Immediate rewards: $r_w(S_t) = \frac{w}{S_t} \times \alpha (1 - S_t)$ and $r_l(S_t) = - w \times \alpha $.
\end{itemize}
The state-value recursion is:
\begin{equation}
\label{eq:fr_u}
\begin{split}
U(S_t)
&= \Big(U((1 - \alpha)S_t + \alpha) + r_w(S_t)\Big) \times \mathcal{F}(b(S_t))\\
&\quad + \Big(U((1 - \alpha)S_t) + r_l(S_t)\Big) \times \big(1 - \mathcal{F}(b(S_t))\big).
\end{split}
\end{equation}
The marginal utility is:
\begin{equation}
    b(S_t) = \Delta U(S_t) = U((1 - \alpha)S_t + \alpha) - U((1 - \alpha)S_t) + r_w(S_t) - r_l(S_t).
\end{equation}
The above equations are solvable using iterative dynamic programming.
\\~\\
\textbf{(iv) Coflow Completion Time (CCT).}
We use the FCT bidding agent discussed in (i) but with the change that remaining size is replaced with remaining coflow bytes on the bottleneck link. It requires a per coflow-centric agent that knows the sizes and src-dst pairs of all flows within that coflow. This effectively mimics SEBF scheduling policy used in Varys~\cite{varys}. For full Varys style bidding, we need MADD which can be emulated using our deadline bidding API. Bottleneck flows should use SEBF bidding while non-bottleneck flows should use deadline bidding with bottleneck FCT as their deadline. 
\\~\\
\textbf{(v) Best Effort (Fixed Service Tier).}
This is a default bidding solution for participants only having the notion of entitlement and not of urgency and are not particularly interested in any objectives. It simply bids a constant value per RTT.

\newpage
\section{Convergence and Stability}
\label{sec:app_conv}
We stress the market with abrupt changes and quantify convergence using the \emph{average market price} (the mean of the global price distribution $\mathcal{F}$) measured once per epoch. Note that each epoch is an independent simulation having thousands of flows/bidding agents.

\para{Cold-start/Warm-up.}
When the market is in infancy with no bidding agents, we start from an assumed uniform prior over the global price signal $\mathcal{F}$ and allow bidding agents to update their policies using it. Then the iterative process of observing the real $\mathcal{F}$ and updating bidding policies continues until a fixed point is reached where $\mathcal{F}$ stops changing. Figure~\ref{fig:conv_init} shows the evolution of the average price with (a) direct updates using $\mathcal{F}$ and (b) smoothed updates using EWMA on $\mathcal{F}$ with the older distribution. In both cases, prices converge within roughly 4--6 epochs, while EWMA smoothing reduces transient overshoot. The reducing gap between average prices over epochs verifies the contractive behaviour expected of the market operator, leading to quick convergence.

\para{Step Change in Load.}
Next, we increase offered load from 40\% to 80\% in a single epoch and observe how quickly the prices adjust. Figure~\ref{fig:conv_load} shows that the average price rises to the new operating point within roughly 4--6 epochs, again with smaller oscillations under EWMA smoothing, showing the same contractive behaviour.

\begin{figure}[h]
    \centering
    \includegraphics[width=0.45\textwidth]{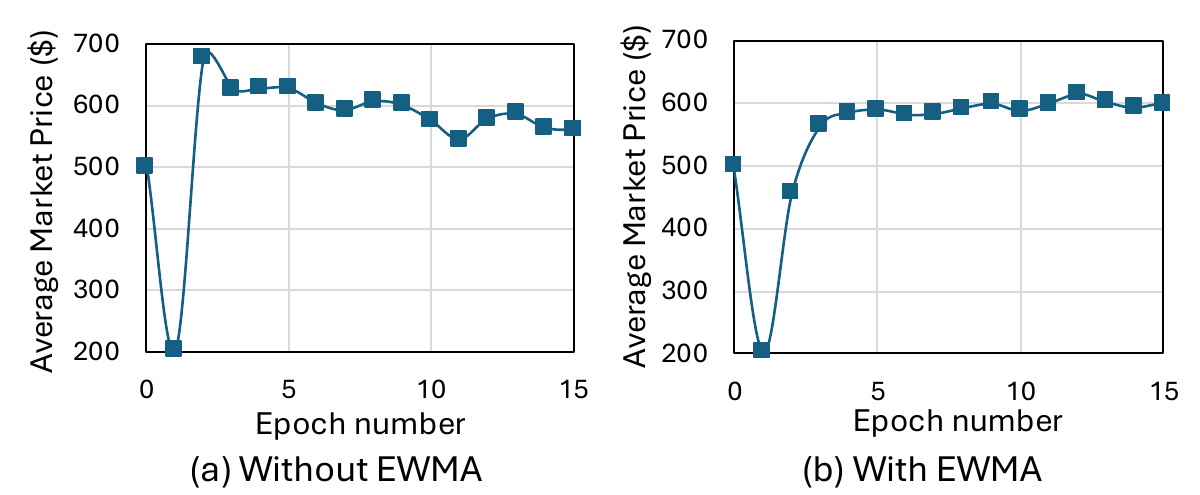}
    \caption{Convergence under cold-start of market.}
    \label{fig:conv_init}
\end{figure}

\begin{figure}[h]
    \centering
    \includegraphics[width=0.45\textwidth]{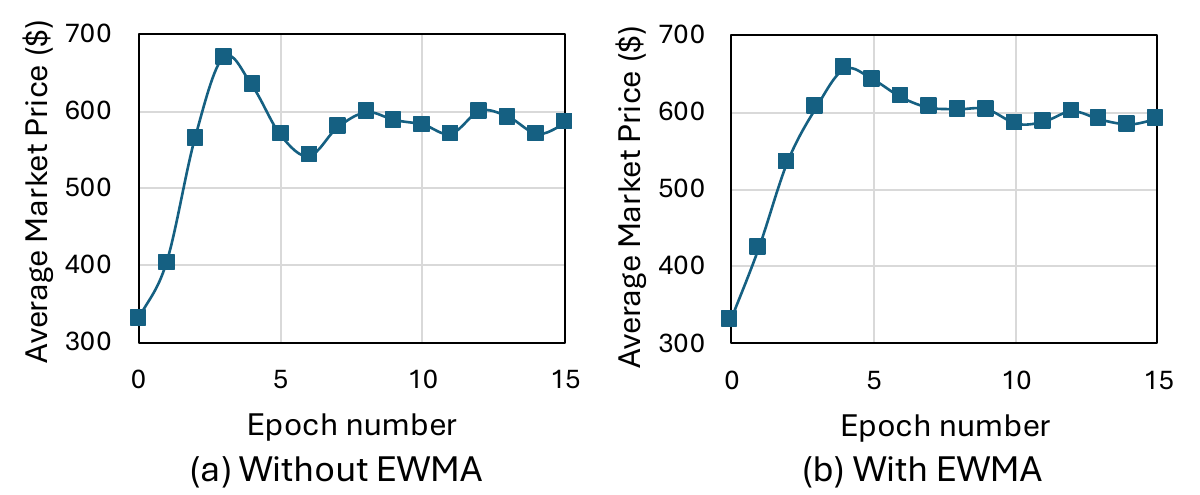}
    \caption{Convergence under step change in load.}
    \label{fig:conv_load}
\end{figure}
\label{totalpages}
\end{document}